\documentclass[reprint,longbibliography,superscriptaddress,amsmath,amssymb,aps,prx,twocolumn]{revtex4-2}

\usepackage{amsmath,amssymb,amsthm,mathrsfs,amsfonts,dsfont}
\usepackage{qcircuit}
\usepackage[dvips]{graphicx}
\usepackage{braket}
\usepackage{bm}
\usepackage{bbm}
\usepackage{enumerate}
\usepackage{color}
\usepackage{comment}
\usepackage{hyperref}
\hypersetup{
    colorlinks=true,
    linkcolor=blue,
    filecolor=blue,      
    urlcolor=blue,
    citecolor=blue,
    pdftitle={Efficient Simulation of Leakage Errors in Quantum Error Correcting Codes Using Tensor Network Methods},
    pdfpagemode=FullScreen,
}
\usepackage[normalem]{ulem}

\newtheorem{theorem}{Theorem}

\newtheorem{definition}{Definition}
\newtheorem{lemma}{Lemma}

\newtheorem{problem}{Problem}

\begin{document}
\title{Tensor Network Formulation of Dequantized Algorithms for Ground State Energy Estimation}

\author{Hidetaka Manabe}
\email{u687502j@ecs.osaka-u.ac.jp}
\affiliation{Graduate School of Engineering Science, The University of Osaka,
1-3 Machikaneyama, Toyonaka, Osaka 560-8531, Japan}

\author{Takanori Sugimoto}
\affiliation{Center for Quantum Information and Quantum Biology, The University of Osaka, 1-2 Machikaneyama, Toyonaka, Osaka 560-8531, Japan}
\affiliation{Computational Materials Science Research Team, RIKEN Center for Computational Science (R-CSS), Kobe, Hyogo 650-0047, Japan}
\affiliation{Advanced Science Research Center, Japan Atomic Energy Agency, Tokai, Ibaraki 319-1195, Japan}

\author{Keisuke Fujii}
\affiliation{Graduate School of Engineering Science, The University of Osaka,
1-3 Machikaneyama, Toyonaka, Osaka 560-8531, Japan}
\affiliation{Center for Quantum Information and Quantum Biology, The University of Osaka, 1-2 Machikaneyama, Toyonaka, Osaka 560-8531, Japan}
\affiliation{RIKEN Center for Quantum Computing (RQC), Hirosawa 2-1, Wako, Saitama 351-0198, Japan}

\begin{abstract}
Verifying quantum advantage for practical problems, particularly the ground state energy estimation (GSEE) problem, is one of the central challenges in quantum computing theory.
For that purpose, dequantization algorithms play a central role in providing a clear theoretical framework to separate the complexity of quantum and classical algorithms.
However, existing dequantized algorithms typically rely on sampling procedures, leading to prohibitively large computational overheads and hindering their practical implementation on classical computers.
In this work, we propose a tensor network-based dequantization framework for GSEE that eliminates the sampling process while preserving the asymptotic complexity of prior dequantized algorithms.
In our formulation, the overhead arising from sampling is replaced by the growth of the bond dimension required to represent Chebyshev vectors as tensor network states.
Consequently, physical structure, such as entanglement and locality, is naturally reflected in the computational cost.
By combining this approach with tensor network approximations, such as Matrix Product States (MPS), we construct a practical dequantization algorithm that is executable within realistic computational resources.
Numerical simulations demonstrate that our method can efficiently construct high-degree polynomials up to $d=10^4$ for Hamiltonians with up to $100$ qubits, explicitly revealing the crossover between classically tractable and quantum advantaged regimes.
These results indicate that tensor network-based dequantization provides a crucial tool toward the rigorous, quantitative verification of quantum advantage in realistic many-body systems.

\end{abstract}
\maketitle

\section{Introduction}
Quantum computers have attracted significant attention for their potential to perform information processing tasks that are intractable for classical computers~\cite{arute_Quantum_2019a, kim_Evidence_2023a}.
Among these tasks, ground state energy estimation (GSEE)~\cite{lin_HeisenbergLimited_2022,ding_Even_2023,wang_Quantum_2023,yoshioka_Hunting_2024,lee_Evaluating_2023} is a central problem in condensed matter physics and quantum chemistry, where exponential quantum speedup is anticipated.
Formally, given an $n$-qubit Hamiltonian of the form $H = \sum_{i=1}^{\mathrm{poly}(n)} H_i,$ the goal of GSEE is to estimate the ground state energy of $H$ to within the target precision $\epsilon$.
When the Hamiltonian can be accessed on a quantum computer and a guiding state with sufficient overlap with the true ground state is available, this task can be efficiently solved using algorithms based on quantum phase estimation (QPE) or quantum singular value transformation (QSVT) with eigenvalue filtering~\cite{martyn_Grand_2021, gilyen_Quantum_2019}.
Reflecting this broad applicability and importance, there has been extensive research on implementing these algorithms on an early fault-tolerant quantum computer with reduced resource requirements or shallower circuit depth~\cite{lin_Nearoptimal_2020,lin_HeisenbergLimited_2022,dong_GroundState_2022,ding_Even_2023,wang_Quantum_2023,ni_lowdepth_2023,kiss_Early_2025}.

However, the exponential speedup associated with the GSEE problem has a more nuanced structure than previously understood.
Gharibian and Le Gall~\cite{gharibian_Dequantizing_2023} showed that, given a guiding state sufficiently close to the ground state, GSEE for a $k$-local Hamiltonian is BQP-complete when the required energy precision $\epsilon$ scales as $1/\mathrm{poly}(n)$, whereas it becomes efficiently solvable on a classical computer when the precision is constant.
In their work, they employed an approach known as {\it dequantization}: the process of constructing a classical analogue of a quantum algorithm by relaxing certain assumptions on the problem~\cite{tang_quantuminspired_2019,tang_Quantum_2021,tang_Dequantizing_2022,gharibian_Dequantizing_2023,gall_Classical_2024,legall_Robust_2025,chia_Samplingbased_2020}.
Through dequantization, they revealed that the computational hardness hinges critically on the scaling of a single parameter, the precision $\epsilon$: different scalings separate classically tractable regimes from those requiring full quantum computational power.
Hence, the approach of dequantization serves as a powerful theoretical tool for probing the boundary between classical and quantum computation and for rigorously verifying quantum advantage.

Despite its theoretical importance, the existing dequantized algorithms are computationally impractical.
Many dequantization algorithms rely on Monte Carlo sampling to estimate inner products between high-dimensional vectors~\cite{legall_Robust_2025}, leading to enormous sampling overheads that render practical implementation on classical hardware infeasible.
For example, in the dequantized algorithm of Gharibian and Le Gall~\cite{gharibian_Dequantizing_2023} for the GSEE problem of a $k$-local Hamiltonian, the runtime scales as $O(2^{(8+k)/\epsilon})$.
Even for $\epsilon=0.1$, which is extremely coarse and far from practically meaningful, the resulting classical cost becomes prohibitively high.
As a result, prior studies on dequantization have largely remained at the level of computational complexity theory, without concrete implementations or numerical demonstrations.
This motivates the development of practically executable dequantization algorithms that preserve the theoretical insights while being computationally feasible on classical devices.

\begin{figure*}[t]
    \centering
    \includegraphics[width=0.9\linewidth]{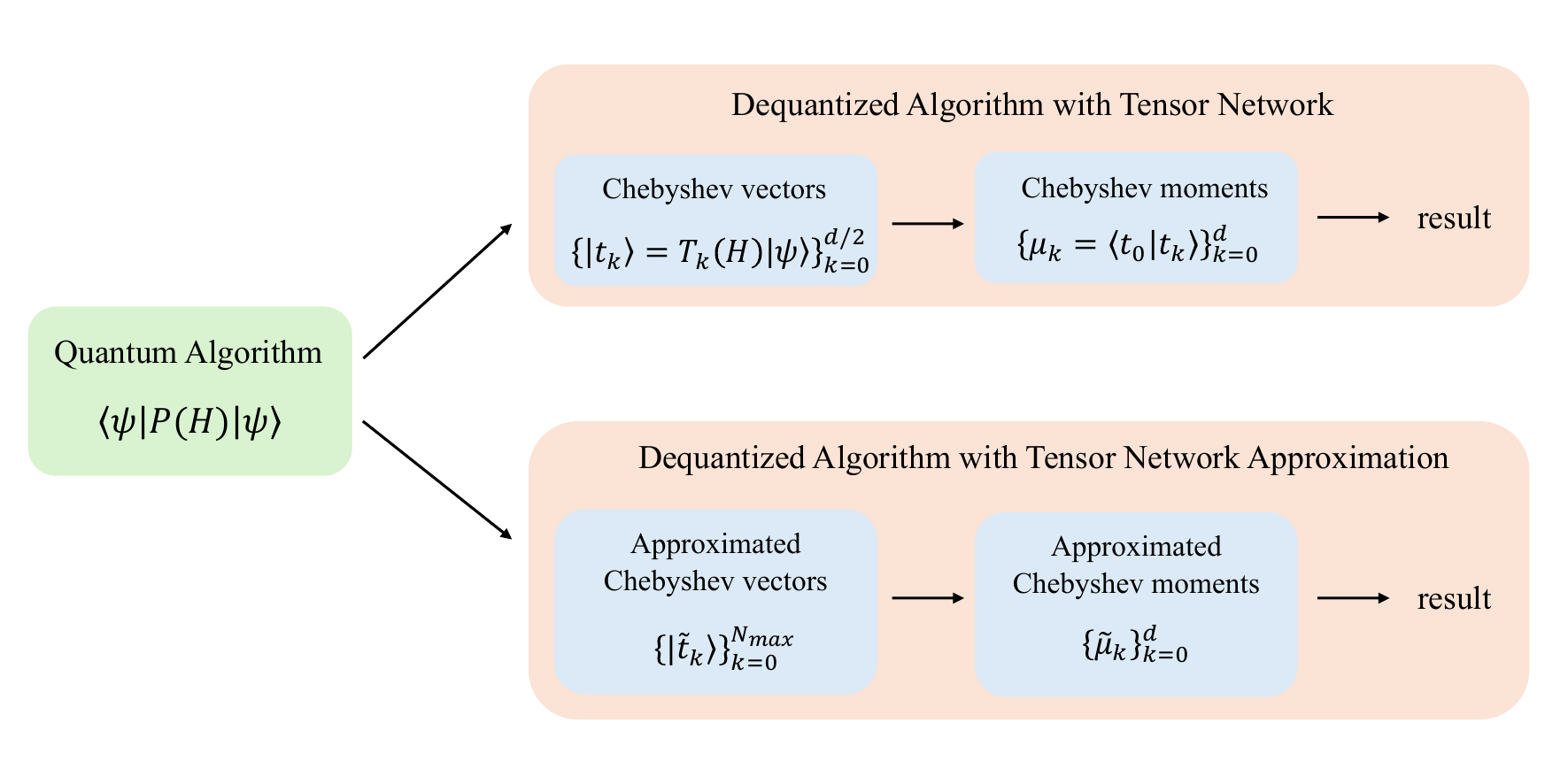}
    \caption{An overview of the dequantized algorithm with tensor networks for ground state energy estimation.
    $P$ denotes a polynomial of degree $d$, and $T_k$ denotes the $k$th Chebyshev polynomial.
    $\ket{\widetilde{t}_k}$ and $\widetilde{\mu}_k$ are approximations of the Chebyshev vector $\ket{t_k}$ and the moment $\mu_k$ respectively, and we assume $N_{\mathrm{max}}<d$.
    The formal definitions are provided later in the main text.
    }
    \label{fig:overview}
\end{figure*}

In this study, we introduce tensor network techniques~\cite{schollwock_densitymatrix_2011, orus_Practical_2014a, cirac_Matrix_2021, okunishi_Developments_2022, collura_Tensor_2024} to formulate a dequantized algorithm for solving the GSEE problem.
We classically simulate the QSVT algorithm with tensor networks, which share the same formulation as the Chebyshev Matrix Product States (ChebMPS) method~\cite{holzner_Chebyshev_2011, halimeh_Chebyshev_2015}, which was developed in condensed matter physics.
We reframe this technique within the rigorous framework of dequantization to quantitatively probe the boundary of quantum advantage.
An overview of our approach is shown in Fig.~\ref{fig:overview}.
We first expand the polynomial filter $P$ used in the quantum algorithm in the Chebyshev basis.
We then represent the Chebyshev vectors as tensor network states and compute the Chebyshev moments from these representations.
In this formulation, all quantities are evaluated deterministically through tensor network contraction.
As a result, we eliminate the sampling overhead arising from statistical error while preserving the same asymptotic complexity scaling.
Moreover, this formulation allows us to utilize the locality of the Hamiltonian to reduce the effective computational cost.

Furthermore, we introduce a practical variant of the dequantized algorithm that leverages tensor network approximation methods.
In this approach, we approximate the Chebyshev vectors using tensor network states and estimate the corresponding moments via linear prediction.
As long as these approximation errors remain within a tolerable range, the computational cost scales linearly---rather than exponentially---in $1/\epsilon$, yielding a dequantized algorithm that is executable in practice.
This formulation effectively recasts the source of computational hardness: the bottleneck shifts from the statistical variance of sampling to the entanglement growth of the Chebyshev vectors.

We implement the proposed method and benchmark it on one- and two-dimensional transverse-field Ising models (TFIM).
In particular, we use Matrix Product States (MPS)~\cite{perez-garcia_Matrix_2007} to represent Chebyshev vectors.
We demonstrate that our approach can efficiently construct high-degree polynomial filters up to $d=10^4$, which is fundamentally inaccessible to prior sampling-based approaches, and execute ground state energy estimation for $100$-qubit systems.
As a result, while the 1D TFIM can be dequantized with tensor networks to high precision, the 2D TFIM cannot be fully dequantized with the bond dimensions used in our experiments.
These results, however, explicitly visualize the crossover between classically tractable and quantum-advantaged regimes through a single precision parameter $\epsilon$, which captures the essence of the dequantization approach.
In this way, our framework can serve as a key methodology for the quantitative verification of quantum advantage in realistic many-body systems.

The paper is organized as follows.  
In Sec.~\ref{sec:preliminaries}, we review the tensor network methods and the GSEE
algorithm used in this work.  
In Sec.~\ref{sec:dequantize_exact}, we formulate the dequantized algorithm based on
exact tensor network contractions, and in Sec.~\ref{sec:dequantize_approx}, we extend
the formulation to include tensor network approximation methods.  
Section~\ref{sec:results} presents numerical results for one- and two-dimensional
transverse-field Ising models, demonstrating the practical utility of our approach.  
Finally, Sec.~\ref{sec:conclusion} summarizes our contributions.

\section{preliminaries}\label{sec:preliminaries}
In this section, we provide the tools required for developing a tensor network formulation of the dequantized algorithm for the GSEE problem.
We first introduce the basic concept of tensor networks, and then review the eigenvalue filtering techniques used in quantum and dequantized algorithms for GSEE.

\subsection{Tensor network}
A tensor network~\cite{orus_Practical_2014a, cirac_Matrix_2021, okunishi_Developments_2022} is a graphical formalism that represents a high-rank tensor as a contraction of many small tensors.
Tensor networks are widely used as a computational tool in fields such as condensed matter physics~\cite{white_Density_1992,nishino_Corner_1996,vidal_Efficient_2004,verstraete_Renormalization_2004,vidal_Entanglement_2007,schollwock_densitymatrix_2011}, data science~\cite{oseledets_TensorTrain_2011,stoudenmire_Supervised_2016,cichocki_Tensor_2016,han_Unsupervised_2018} and quantum computing~\cite{vidal_Efficient_2003,huggins_Quantum_2019,huang_Efficient_2021,collura_Tensor_2024} to avoid the so-called curse of dimensionality, where the memory cost increases exponentially with the number of dimensions.
We define tensor networks on an undirected graph.

\begin{definition}[Tensor network]
A tensor network is a triple $(G, \{T_v\}_{v \in V}, \{d_e\}_{e \in E})$ consisting of:
\begin{itemize}
    \item An undirected graph $G = (V, E)$.
    \item A positive integer $d_e\in \mathbb{N}$ associated with each edge $e \in E$, called the \emph{bond dimension}.
    \item For each vertex $v \in V$, a tensor
    \begin{equation}
        T_v \in \bigotimes_{e \in E_v} \mathbb{C}^{d_e},
    \end{equation}
    where $E_v:=\{e\in E: e\ni v\}$ denotes the set of edges incident to $v$.
\end{itemize}
\end{definition}
We distinguish between \emph{internal edges}, which connect two vertices (including loops), and \emph{dangling edges}, which are incident on only one vertex. 

\begin{definition}[Tensor network states]
A \emph{tensor network state} $TNS(G, \{T_v\}_{v \in V}, \{d_e\}_{e \in E})$ is a quantum state obtained by contracting all internal edges of the tensor network, i.e., summing over all indices associated with internal edges:
\begin{equation}
\ket{\mathrm{TNS}(G,\{T_v\},\{d_e\})}
=
\sum_{\{i_e\}_{e\in E_{\mathrm{int}}}}
\prod_{v\in V}
T_v\bigl(\{i_e\}_{e\in E_v}\bigr),
\end{equation}
where $E_\mathrm{int}$ denotes the set of internal edges.
\end{definition}

Throughout this study, we consider qubit systems and assume that $d_e = 2$ whenever $e$ is a dangling edge.

\begin{definition}[Inner product of tensor network states]
Let $(G, \{T_v\}_{v \in V}, \{d_e\}_{e \in E})$ and $(G, \{T'_v\}_{v \in V}, \{d'_e\}_{e \in E})$ be two tensor network states defined on the same graph $G = (V, E)$.
Suppose $d_e = d'_e$ if $e$ is a dangling edge.
Then, their inner product is defined by connecting the two tensor networks along the dangling edges and contracting:
\begin{align}
\braket{\mathrm{TNS}' | \mathrm{TNS}}
&=
\sum_{\{i_e\}_{e\in E}}
\sum_{\{j_e\}_{e\in E}}
\notag\\
&\quad\times
\prod_{v \in V}
\overline{T'_v(\{j_e\}_{e \in E_v})}
\,
T_v(\{i_e\}_{e \in E_v})
\notag\\
&\quad\times
\prod_{e \in E_{\mathrm{dang}}}
\delta_{i_e, j_e}.
\end{align}
Here, the overline denotes complex conjugation, $E_\mathrm{dang}$ denotes the set of dangling edges, and $\delta$ is a Kronecker delta.
\end{definition}

\begin{definition}[Efficiently contractible tensor network states]
We say that a family of tensor network states $\mathrm{TNS}(G, \{T_v\}, \{d_e\})$ that represent an $n$-qubit state is \emph{efficiently contractible} if the inner product between two tensor network states $\mathrm{TNS}(G, \{T_v\}, \{d_e\})$ and $\mathrm{TNS}(G, \{T'_v\}, \{d'_e\})$ can be computed in time polynomial in the number of qubits and the maximum bond dimension, i.e., in $\mathrm{poly}(n, \max_e d_e, \max_e d'_e)$ time.
\end{definition}

\begin{figure}[t]
    \centering
    \includegraphics[width=0.9\linewidth]{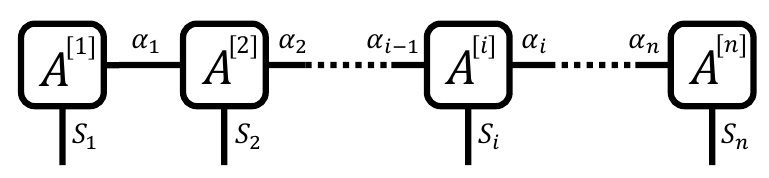}
    \caption{The diagram notation of MPS, a tensor network with a simple 1D structure. $\{s_i\}$ is a set of indices that corresponds to the dangling edges, and $\{\alpha_i\}$ is a set of internal indices.}
    \label{fig:MPS}
\end{figure}

A typical example of an efficiently contractible tensor network state is the \emph{matrix product states} (MPS).
An MPS is a tensor network defined on a 1D chain graph and can be written as:
\begin{equation}
\ket{\psi} = \sum_{\{s_i\}}\sum_{\{\alpha_i\}}\left[A_{\alpha_1}^{[1]s_1}A_{\alpha_1\alpha_2}^{[2]s_2}\cdots A_{\alpha_{n-1}}^{[n]s_n}\right]\ket{s_1s_2\cdots s_n}
\end{equation}
where $s_i$ denotes the physical index of a $i$th qubit and $\alpha_i$ denotes the internal index.
We illustrate the diagram notation of MPS in Fig.~\ref{fig:MPS}.

\begin{figure}[t]
    \centering
    \includegraphics[width=0.7\linewidth]{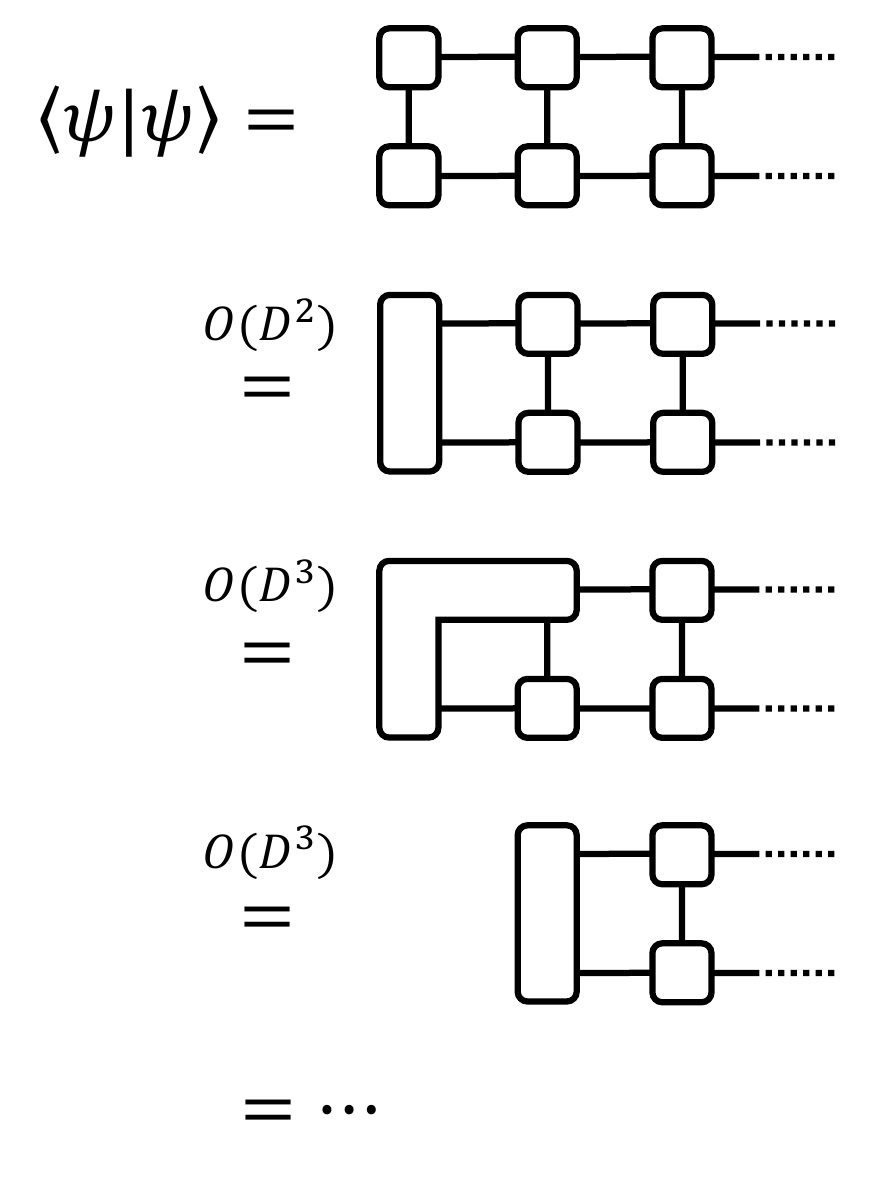}
    \caption{The process of calculating the inner product of two MPS. The double-layered MPS is contracted from the left-hand side.
    Each contraction costs $O(D^3)$, and this process is repeated $n$ times.}
    \label{fig:MPS_inner}
\end{figure}

Let $D = \max_e d_e$.
Then, the cost of computing the inner product scales as $\mathcal{O}(n D^3)$, indicating that MPS is efficiently contractible.
In Fig.~\ref{fig:MPS_inner}, we express the process of calculating the inner product of MPS in diagrammatic notation.

\subsection{Eigenvalue filtering}
In this section, we briefly review the eigenvalue filtering algorithm based on Quantum Singular Value Transformation (QSVT)~\cite{martyn_Grand_2021,gilyen_Quantum_2019}, which forms the foundation of the dequantized algorithm for GSEE.
\subsubsection{QSVT}
The QSVT algorithm applies a polynomial transformation to the singular values of a general matrix $A$.
In the context of GSEE, however, the matrix of interest is a non-negative Hermitian operator $H \geq 0$.
In this case, QSVT effectively performs quantum eigenvalue transformation (QEVT), in which a polynomial is applied directly to the eigenvalues of $H$.
Consequently, we adopt the QEVT framework throughout this work.
We also assume that the Hamiltonian is normalized: $\|H\|\leq 1$.

We assume access to $H$ through a block-encoding in a unitary operator $U$:
\begin{equation}
U = 
\begin{pmatrix}
H & * \\
* & *
\end{pmatrix}.
\end{equation}
Let $P\colon [-1, 1] \to \mathbb{R}$ be a real polynomial with degree $d$ satisfying:
\begin{itemize}
    \item P has parity $d \bmod 2$
    \item $|P(x)|\leq 1,\quad x\in[-1, 1]$.
\end{itemize}
The QEVT framework allows us to efficiently implement a unitary operator $U_P$ such that
\begin{equation}
U_P = 
\begin{pmatrix}
P(H) & * \\
* & *
\end{pmatrix}.
\end{equation}
Here, $P(H)$ is the matrix polynomial defined as
\begin{equation}
P(H) = \sum_{i} P(\lambda_i)\ket{\lambda_i}\bra{\lambda_i}
\end{equation}
where $\lambda_i$ is an eigenvalue of $H$ and $\ket{\lambda_i}$ is the corresponding eigenvector.

\subsubsection{Filtering function}
\begin{figure}[t]
    \centering
    \includegraphics[width=0.9\linewidth]{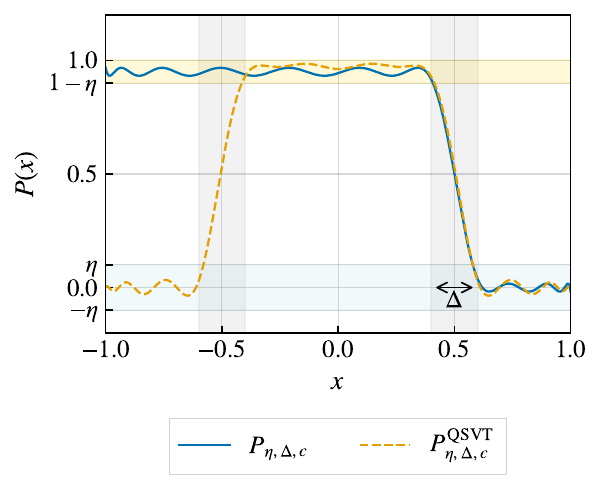}
    \caption{An example of polynomial approximation of the shifted sign function $P_{\eta,\Delta,c}$ and the rectangle function $P_{\eta,\Delta,c}^{\mathrm{QSVT}}$. Parameters are set as $\eta=0.1, \Delta=0.2, c=0.5$ and the degree is $20$.
    The polynomial approximation is obtained by CVXPY~\cite{diamond_CVXPY_2016}.}
    \label{fig:ET}
\end{figure}

The core of the QSVT-based GSEE algorithm lies in the design of the polynomial $P$.
By approximating the \emph{shifted sign function} with a polynomial and applying it as $P(H)$, one can filter out all eigenvalues of $H$ above a given threshold, enabling efficient ground state preparation and energy estimation.
Efficient polynomial approximations of the shifted sign function are well studied.

\begin{lemma}[Polynomial approximation of the shifted sign function~\cite{low_Hamiltonian_2017,gilyen_Quantum_2019}]
Let $\eta,\Delta\in(0,\frac{1}{2})$ and $c\in[-1,1]$. There exists a polynomial $P_{\eta,\Delta,c}(x)$ of degree $O\left(\frac{1}{\Delta}\log(1/\eta)\right)$ such that
\begin{equation}
    |P_{\eta,\Delta,c}(x)|\leq 1
\end{equation}
for all $x\in[-1,1]$ and 
\begin{equation}
\left\{
\begin{array}{ll}
P_{\eta,\Delta,c}(x) \in [1-\eta, 1] & \text{for } x \in \left[-1, c - \Delta/2 \right], \\
P_{\eta,\Delta,c}(x)  \in [-\eta, \eta] & \text{for } x \in \left[ c + \Delta/2, 1 \right].
\end{array}
\right.
\end{equation}
\end{lemma}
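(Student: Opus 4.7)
The plan is to reduce the statement to a sharp polynomial approximation of the \emph{unshifted} sign function on $[-1,1]$, then perform an affine change of variables. Suppose we have an odd polynomial $S_{\delta,\eta'}\colon[-1,1]\to\mathbb{R}$ of degree $O((1/\delta)\log(1/\eta'))$ satisfying $|S_{\delta,\eta'}(x)|\le 1$ on $[-1,1]$ and $|S_{\delta,\eta'}(x)-\mathrm{sign}(x)|\le\eta'$ whenever $|x|\ge\delta$. I would then define
\begin{equation}
P_{\eta,\Delta,c}(x) := \frac{1 - S_{\Delta/4,\eta}\bigl((x-c)/2\bigr)}{2}.
\end{equation}
Because $x,c\in[-1,1]$, the argument $(x-c)/2$ lies in $[-1,1]$, and $|(x-c)/2|\ge\Delta/4$ precisely when $|x-c|\ge\Delta/2$. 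The degree of $P_{\eta,\Delta,c}$ therefore equals that of $S_{\Delta/4,\eta}$, namely $O((1/\Delta)\log(1/\eta))$.

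Verifying the stated bounds is then a short computation. Globally $|P_{\eta,\Delta,c}|\le 1$ because $S\in[-1,1]$ forces $(1-S)/2\in[0,1]$. For $x\in[-1,c-\Delta/2]$ we have $(x-c)/2\le-\Delta/4$, so the sign-approximation bound gives $S\in[-1,-1+\eta]$ and hence $P_{\eta,\Delta,c}\in[1-\eta/2,1]\subset[1-\eta,1]$. For $x\in[c+\Delta/2,1]$ the argument is symmetric and yields $P_{\eta,\Delta,c}\in[0,\eta/2]\subset[-\eta,\eta]$.

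The substantive step is therefore the construction of $S_{\delta,\eta'}$. The approach I would take is the classical two-stage one: first approximate $\mathrm{sign}(x)$ by $\mathrm{erf}(kx)$, a smooth surrogate obeying $|\mathrm{erf}(kx)-\mathrm{sign}(x)|\lesssim e^{-k^2 x^2}/(k|x|)$ away from zero; then approximate $\mathrm{erf}(kx)$ on $[-1,1]$ by a truncated Chebyshev (or Jacobi--Anger) series. Choosing $k=\Theta((1/\delta)\sqrt{\log(1/\eta')})$ drives the first error below $\eta'/2$ on $\{|x|\ge\delta\}$, and truncating the Chebyshev expansion of $\mathrm{erf}(kx)$ at degree $N=O(k\sqrt{\log(1/\eta')})=O((1/\delta)\log(1/\eta'))$ drives the second error below $\eta'/2$. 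A final rescaling by $(1+\eta')^{-1}$ enforces $\|S_{\delta,\eta'}\|_\infty\le 1$ at the cost of at most a constant factor in the error, and parity of $\mathrm{erf}$ ensures that $S_{\delta,\eta'}$ is automatically odd.

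The main obstacle is establishing that the required Chebyshev degree is \emph{linear} in $k$ rather than quadratic. A naive Taylor truncation of $\mathrm{erf}(kx)$ would require degree $O(k^2+\log(1/\eta'))$, which after the choice $k\sim(1/\delta)\sqrt{\log(1/\eta')}$ yields $O((1/\delta^2)\log(1/\eta'))$ --- strictly worse than the claim. Recovering the linear-in-$1/\delta$ scaling demands exploiting the analyticity of $\mathrm{erf}$ in a horizontal strip together with Bernstein-type decay estimates for Chebyshev coefficients on a suitably chosen ellipse, which is precisely the quantitative ingredient worked out in the cited references of Low--Chuang and Gilyén et al. that one would need to invoke (or carefully re-derive) to complete the argument.
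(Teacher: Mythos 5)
The paper does not prove this lemma at all --- it imports it verbatim from Low--Chuang and Gily\'en et al. --- and your sketch reproduces exactly the standard argument from those references: reduce to the unshifted sign function via the affine substitution $(x-c)/2$ with threshold $\Delta/4$, approximate $\mathrm{sign}$ by $\mathrm{erf}(kx)$, and Chebyshev-truncate using Bernstein-ellipse decay to get degree linear rather than quadratic in $1/\delta$. The reduction steps you carry out explicitly are all correct, and you rightly flag that the one substantive quantitative ingredient (the $O(k\sqrt{\log(1/\eta')})$ truncation degree for $\mathrm{erf}(kx)$) must be taken from, or re-derived as in, the cited works.
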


When implementing this transformation via QSVT, the polynomial $P$ must satisfy the parity constraint.
Therefore, instead of directly using a general shifted sign function, we consider the following rectangle function tailored for QSVT.

\begin{lemma}[Polynomial approximation of the rectangle function~\cite{low_Hamiltonian_2017,gilyen_Quantum_2019}]
Let $\eta,\Delta\in(0,\frac{1}{2})$ and $c\in[-1,1]$. There exists an even polynomial $P_{\eta,\Delta,c}^{\mathrm{QSVT}}(x)$ of degree $O\left(\frac{1}{\Delta}\log(1/\eta)\right)$ such that
\begin{equation}
    |P_{\eta,\Delta,c}^{\mathrm{QSVT}}(x)|\leq 1
\end{equation}
for all $x\in[-1,1]$ and 
\begin{equation}
\left\{
\begin{array}{ll}
P_{\eta,\Delta,c}^{\mathrm{QSVT}}(x) \in [1-\eta, 1] 
& \text{for } x \in [-c+\Delta/2,\; c-\Delta/2], \\[6pt]
P_{\eta,\Delta,c}^{\mathrm{QSVT}}(x) \in [-\eta, \eta] 
& \text{for } x \in [-1,\; -c-\Delta/2] \\
& \qquad\quad\,\,\cup\,[\,c+\Delta/2,\; 1\,].
\end{array}
\right.
\end{equation}
\end{lemma}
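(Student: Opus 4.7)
The plan is to build $P^{\mathrm{QSVT}}_{\eta,\Delta,c}$ by multiplying two reflected copies of the shifted-sign polynomial from the preceding lemma, exploiting the elementary identity that the indicator of $[-c,c]$ is the product of the step indicators of $(-\infty,c]$ and $[-c,\infty)$. Concretely, I would invoke the previous lemma with a tightened error parameter $\eta':=\eta/2$ to obtain $P(x):=P_{\eta',\Delta,c}(x)$, and then define
\begin{equation}
P^{\mathrm{QSVT}}_{\eta,\Delta,c}(x) \;:=\; P(x)\,P(-x).
\end{equation}
Reflecting $x\mapsto -x$ merely swaps the two factors, which yields evenness for free, and $|P^{\mathrm{QSVT}}_{\eta,\Delta,c}(x)|\leq |P(x)|\,|P(-x)|\leq 1$ on $[-1,1]$. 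Since $P(-x)$ is a polynomial of the same degree as $P(x)$, the product has degree at most $2\deg P = O\!\bigl(\Delta^{-1}\log(1/\eta)\bigr)$, matching the claimed asymptotic.

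The remaining work is a short three-case verification of the regional bounds. On the central band $[-c+\Delta/2,\,c-\Delta/2]$, both $x$ and $-x$ lie in $[-1,\,c-\Delta/2]$, so both factors land in $[1-\eta',1]$ and the product lies in $[(1-\eta')^2,1]\subseteq[1-\eta,1]$. On the right outer band $[c+\Delta/2,1]$, one has $P(x)\in[-\eta',\eta']$ and $|P(-x)|\leq 1$, so the product is in $[-\eta,\eta]$; the left outer band $[-1,\,-c-\Delta/2]$ is handled symmetrically by reapplying the argument to $P(-x)$.

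I expect the only real subtlety to be the implicit assumption $c\geq\Delta/2$ (the central band is otherwise empty): this positivity of $c$ is exactly what guarantees that whenever $x$ lies in one outer band, the reflected point $-x$ lands inside the \emph{far} step region of the previous lemma, not in the transition window where $P$ is uncontrolled. Beyond this bookkeeping, the construction introduces no new approximation theory over the previous lemma, so I do not anticipate any additional obstacle.
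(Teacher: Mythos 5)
Your construction is correct, but note that the paper does not actually prove this lemma --- it imports it verbatim from Refs.~\cite{low_Hamiltonian_2017,gilyen_Quantum_2019}, so there is no in-paper argument to compare against. Your reduction to the preceding shifted-sign lemma via the product $P^{\mathrm{QSVT}}_{\eta,\Delta,c}(x):=P_{\eta/2,\Delta,c}(x)\,P_{\eta/2,\Delta,c}(-x)$ is a clean and self-contained route: evenness and the global bound $|P^{\mathrm{QSVT}}|\le 1$ come for free from the reflection and from multiplying two functions bounded by $1$, the central band gives $(1-\eta/2)^2\ge 1-\eta$, and on each outer band one factor is at most $\eta/2$ in magnitude while the other is at most $1$, so all three regional bounds check out; the cost is only a factor of $2$ in degree and the replacement $\eta\to\eta/2$, both absorbed by the $O(\tfrac{1}{\Delta}\log(1/\eta))$ statement. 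This differs from the construction usually given in the cited references, which builds the rectangle as a \emph{difference} of two shifted error-function approximants centered at $\pm c$; that route requires extra care to keep the result in $[-1,1]$ inside the transition windows (where the individual approximants are uncontrolled by the black-box statement of the shifted-sign lemma), whereas your product form sidesteps this entirely. Your closing caveat is also the right one: the central-band condition is vacuous unless $c\ge\Delta/2$, and it is precisely this that ensures the reflected point $-x$ of an outer-band $x$ lands in a controlled region rather than a transition window. No gap.
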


For visual intuition, we plot examples of polynomial approximations $P_{\eta,\Delta,c}(x)$ and $P_{\eta,\Delta,c}^{\mathrm{QSVT}}(x)$ in Fig.~\ref{fig:ET}.

\subsubsection{GSEE}
Now we define the central problem studied in this work: ground state energy estimation~\cite{lin_HeisenbergLimited_2022,ding_Even_2023,wang_Quantum_2023,yoshioka_Hunting_2024,lee_Evaluating_2023}.
\begin{problem}[GSEE]
Let $H$ be a Hamiltonian on $n$ qubits satisfying $H\geq 0$ and $\|H\|\leq 1$, i.e., $H$ is non-negative and normalized.
Let $\mathcal{E}(H)$ denote the ground state energy of $H$, namely, its smallest eigenvalue.
Given a precision parameter $\epsilon > 0$, the goal is to compute an estimate $\hat{E}$ such that
\begin{equation}
\left| \hat{E} - \mathcal{E}(H) \right| \leq \epsilon.
\end{equation}
\end{problem}

This problem can be efficiently solved on a quantum computer using QSVT with an appropriate filter function, provided that one has access to a block encoding of $H$ and a guiding state with sufficient overlap with the ground state.

\begin{lemma}[GSEE via QSVT~\cite{martyn_Grand_2021}]\label{lemma:GSEE_QSVT}
Suppose we are given a unitary block encoding of $H$.
Assume that a quantum state $\ket{\psi}$ can be prepared such that it has overlap $\chi := |\braket{\psi | \lambda_0}|$ with the ground state $\ket{\lambda_0}$ of $H$.
Then, for any $\delta > 0$, there exists a quantum algorithm that uses
\begin{equation}
\mathcal{O} \left( \frac{1}{\epsilon \chi^2} \log\left( \frac{1}{\chi} \right) \log\left( \frac{1}{\delta} \right) \right)
\end{equation}
queries to the block encoding of $H$ and outputs an $\epsilon$-approximation of $\mathcal{E}(H)$
with probability at least $1 - \delta$.
\end{lemma}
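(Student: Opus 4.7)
The plan is to realize $\hat E$ by bisecting the energy axis using the rectangle filter from the preceding lemma, with the induced overlap on the guiding state serving as the decision statistic. Set $\Delta = \Theta(\epsilon)$ and choose $\eta = \Theta(\chi)$ so that $\eta^2 \ll \chi^2$. For any candidate threshold $c \in [0,1]$, I would apply $P_{\eta,\Delta,c}^{\mathrm{QSVT}}(H)$ to $\ket{\psi}$ by QSVT on the block encoding, which costs $\mathcal{O}((1/\epsilon)\log(1/\chi))$ queries. Writing $\ket{\psi} = \sum_i \alpha_i \ket{\lambda_i}$, the quantity
\begin{equation}
p(c) := \bigl\|P_{\eta,\Delta,c}^{\mathrm{QSVT}}(H)\ket{\psi}\bigr\|^2 = \sum_i |\alpha_i|^2 \, P_{\eta,\Delta,c}^{\mathrm{QSVT}}(\lambda_i)^2
\end{equation}
is at least $(1-\eta)^2 \chi^2$ whenever $c \geq \mathcal{E}(H) + \Delta/2$ (since then $\lambda_0 = \mathcal{E}(H)$ sits in the pass-band), and at most $\eta^2$ whenever $c \leq \mathcal{E}(H) - \Delta/2$ (since then every eigenvalue lies in the stop-band). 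For the above choice of $\eta$ these two regimes are separated by a gap of order $\chi^2$.

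Next I would convert $p(c)$ into a decision. Since $p(c)$ is the success probability of the ancilla post-selection in the QSVT circuit, direct Bernoulli sampling with $\mathcal{O}(1/\chi^2)$ repetitions suffices, by a Chernoff bound, to distinguish $p(c) \gtrsim \chi^2$ from $p(c) \lesssim \eta^2$ with constant confidence. Boosting the failure probability of a single decision to $\delta'$ adds only a $\log(1/\delta')$ factor via majority voting. One decision therefore consumes $\mathcal{O}\bigl((1/(\epsilon\chi^2))\log(1/\chi)\log(1/\delta')\bigr)$ queries to the block encoding.

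I would then run a bisection on $c \in [0,1]$ down to resolution $\epsilon$: at each of the $\mathcal{O}(\log(1/\epsilon))$ levels, the decision subroutine is invoked with $\delta' = \delta / \mathcal{O}(\log(1/\epsilon))$ so that a union bound over the search path returns an estimate $\hat E$ within $\epsilon$ of $\mathcal{E}(H)$ with overall failure probability at most $\delta$. Collecting factors and absorbing the $\log(1/\epsilon)$ and $\log\log$ contributions into the $\log(1/\delta)$ term, consistent with the stated $\mathcal{O}$, recovers the advertised query complexity.

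The main obstacle I anticipate is the coupled calibration of $\eta$ and $\chi$: the spectral leakage of the filter must be pushed below $o(\chi^2)$ in probability, which forces $\eta \lesssim \chi$ and is what ties the $\log(1/\chi)$ factor in the runtime to the filter degree from the preceding lemma. A related subtlety is that the QSVT block encoding produces, upon successful ancilla projection, the unnormalized vector $P(H)\ket{\psi}$ rather than its normalized counterpart, so the Bernoulli statistic must be interpreted as $\|P(H)\ket{\psi}\|^2$ and not as a conditional expectation on the system register; the proof must track this uniformly across all thresholds $c$ so that a single filter shape and a single error budget apply at every node of the bisection tree.
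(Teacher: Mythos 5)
Your proposal follows essentially the same route as the paper's proof: apply the rectangle filter $P_{\eta,\Delta,c}^{\mathrm{QSVT}}(H)$ via QSVT, use the ancilla success probability $p_0=\|P_{\eta,\Delta,c}^{\mathrm{QSVT}}(H)\ket{\psi}\|^2$ as the decision statistic with $\eta=\Theta(\chi)$ to separate the two cases by a gap of order $\chi^2$, estimate it with $\mathcal{O}(\chi^{-2}\log(1/\delta'))$ Bernoulli samples, and bisect on the threshold $c$.

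One bookkeeping point does not go through as written. You fix $\Delta=\Theta(\epsilon)$ at every level of the bisection, so each of the $\mathcal{O}(\log(1/\epsilon))$ decisions already costs a degree-$\mathcal{O}((1/\epsilon)\log(1/\chi))$ filter, and the total query count picks up an extra $\log(1/\epsilon)$ factor (plus a $\log\log(1/\epsilon)$ from the union-bound boost). These cannot be ``absorbed into the $\log(1/\delta)$ term,'' since $\epsilon$ and $\delta$ are independent parameters. The paper avoids this by tying $\Delta$ to the current search interval, $\Delta=(r-l)/3$, so the filter degrees form a geometric series dominated by the final, finest level, and the total is $\mathcal{O}(1/\epsilon)$ rather than $\mathcal{O}((1/\epsilon)\log(1/\epsilon))$. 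Everything else in your argument, including the careful remark that the statistic is the norm of the unnormalized post-selected vector, matches the paper.
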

\begin{proof}
Given the initial state $\ket{0}\ket{\psi}$, applying the QSVT circuit for 
$P_{\eta,\Delta,c}^{\mathrm{QSVT}}(H)$ yields
\begin{equation}
U_{P}\bigl(\ket{0}\ket{\psi}\bigr)
=
\ket{0}\, P_{\eta,\Delta,c}^{\mathrm{QSVT}}(H)\ket{\psi}
\;+\;
\ket{1}\ket{\mathrm{garbage}}.
\end{equation}
Hence, the probability of measuring the outcome $0$ in the ancilla qubit is
\begin{equation}
p_0
=
\left\|
P_{\eta,\Delta,c}^{\mathrm{QSVT}}(H)\ket{\psi}
\right\|^{2}
=
\sum_{i} |c_i|^{2} 
\left|P_{\eta,\Delta,c}^{\mathrm{QSVT}}(\lambda_i)\right|^{2},
\label{eq:pc-definition}
\end{equation}
where $\ket{\psi}=\sum_i c_i\ket{\lambda_i}$.

We consider the following two cases.

\textbf{Case (1):} $\lambda_i \ge c+\Delta/2$ for all $i$.

In this regime, we have 
$\bigl|P_{\eta,\Delta,c}^{\mathrm{QSVT}}(\lambda_i)\bigr|\le \eta$ for all eigenvalues,
and hence
\begin{equation}
p_0
\le
\sum_i |c_i|^2\, \eta^2
= \eta^2.
\label{eq:case1}
\end{equation}

\textbf{Case (2):} $\lambda_0 \le c-\Delta/2$.

Let $\chi = |\braket{\lambda_0|\psi}|$ be the ground-state overlap.
Since 
$\bigl|P_{\eta,\Delta,c}^{\mathrm{QSVT}}(\lambda_0)\bigr| \ge 1-\eta$,
\begin{equation}
p_0
\ge 
|c_0|^2 (1-\eta)^2
=
\chi^2(1-\eta)^2.
\label{eq:case2}
\end{equation}

Therefore, by choosing the polynomial approximation error as $\eta=\chi/4$ for example, equations~\eqref{eq:case1} and \eqref{eq:case2} imply that the two probabilities satisfy
\begin{equation}
p_0\le \chi^2/16
\qquad\text{vs.}\qquad
p_0\ge \frac{9}{16}\chi^2
\end{equation}
By sampling the ancilla outcome, one can distinguish between the two cases.
Lemma 5 in Ref.~\cite{martyn_Grand_2021} implies that 
\begin{equation}
\mathcal{O}\!\left(
    \frac{1}{\chi^{2}}
    \log\frac{1}{\delta}
\right)
\label{eq:sample-complexity}
\end{equation}
samples suffice to determine the correct case with probability at least $1-\delta$.

We now apply this classifier to perform a binary search.
Assume an initial interval $[l,r]$ satisfies $l < \lambda_0 < r$.
At each iteration, define
\begin{equation}
c = \frac{l+r}{2},
\qquad
\Delta = \frac{r-l}{3},
\end{equation}
and perform the above decision procedure for the threshold~$c$.

\begin{itemize}
    \item If the ancilla measurement returns $0$ with high probability, then we are not in Case~(1),
    and thus
    \begin{equation}
    \lambda_0 < c+\frac{\Delta}{2}=\frac{l+2r}{3},
    \qquad
    r \leftarrow \frac{l+2r}{3}.
    \end{equation}

    \item Otherwise we are not in Case~(2), implying
    \begin{equation}
    \lambda_0 > c-\frac{\Delta}{2}=\frac{2l+r}{3},
    \qquad
    l \leftarrow \frac{2l+r}{3}.
    \end{equation}
\end{itemize}

In each iteration the interval shrinks by a constant factor $2/3$, and therefore
after
\begin{equation}
    \mathcal{O}\!\left(\log\frac{1}{\epsilon}\right)
\end{equation}
steps we obtain an interval $[l,r]$ of width at most~$2\epsilon$.
We output $\hat{E}=(l+r)/2$ as the final estimate.

Each iteration uses the QSVT circuit for $P_{\epsilon,\Delta,c}^{\mathrm{QSVT}}(H)$, whose degree is
\begin{equation}
O\!\left(\frac{1}{\Delta}\log\frac{1}{\eta}\right),
\end{equation}
and 
$\mathcal{O}\!\left(\frac{1}{\chi^{2}}\log\frac{1}{\delta}\right)$
measurements of the ancilla qubit.
Since $\Delta$ scales with the current interval size and decreases geometrically,
the total number of Hamiltonian queries is
\begin{equation}
\mathcal{O}\!\left(
    \frac{1}{\epsilon\,\chi^{2}}
    \log\frac{1}{\chi}
    \log\frac{1}{\delta}
\right).
\end{equation}
\end{proof}

We can confirm that this quantum algorithm remains efficient when $\epsilon=1/poly(n)$.

\subsubsection{dequantization}
We now consider {\it dequantizing} this quantum algorithm.
In this study, we define dequantization as a process of deriving a classical analogue of a quantum algorithm by relaxing several assumptions of the problem~\cite{gharibian_Dequantizing_2023, gall_Classical_2024,legall_Robust_2025,tang_Dequantizing_2022,chia_Samplingbased_2020,sakamoto_quantum_2025,tang_quantuminspired_2019,zhang_Dequantized_2024,tang_Quantum_2021,wu_Efficient_2024,wild_Classical_2023,shin_Dequantizing_2024,bakshi_Improved_2024}.
To dequantize, we impose some additional assumptions.
We consider a $k$-local Hamiltonian on $n$ qubits:
\begin{equation}
    H=\sum_{i=1}^{m} H_i,
\end{equation}
where each term $H_i$ acts non-trivially only on $k$ qubits.
We also assume that we have sample-and-query access to the guiding state $\ket{\psi}$:
\begin{enumerate}
    \item Query access: efficiently compute the amplitude $\braket{\bm{x}|\psi}$ for any bitstring $\bm{x} \in \{0,1\}^n$.
    \item Sample access: efficiently sample a bitstring $\bm{x} \in \{0,1\}^n$ with probability
    $|\braket{\bm{x}|\psi}|^2$.
\end{enumerate}

Under these additional assumptions, the GSEE problem can be solved with the following computational complexity:

\begin{lemma}[GSEE via dequantization~\cite{gharibian_Dequantizing_2023,gall_Classical_2024}]\label{lemma:GSEE_DEQ}
Let $H$ be a $k$-local Hamiltonian on $n$ qubits and assume that $\sum_{i=1}^{m}\|H_i\|=1$.
Suppose we are given sample-and-query access to a guiding state $\ket{\psi}$ with overlap $\chi$.
Then, there exists a classical algorithm that runs in time
\begin{equation}
    O^*\!\left(\frac{1}{\epsilon}\cdot 2^{(k+8)/\epsilon}\cdot\chi^{-8}\right)
\end{equation}
, and outputs an $\epsilon$-approximation of $\mathcal{E}(H)$ with probability at least $1 - 1/\exp(n)$.
\end{lemma}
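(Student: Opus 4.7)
The plan is to mirror the binary-search structure used in the proof of Lemma~\ref{lemma:GSEE_QSVT}, but to replace the quantum measurement of the ancilla at each step by a \emph{classical} estimator of the acceptance probability
\[
p_0(c) \;=\; \bra{\psi} P_{\eta,\Delta,c}^{\mathrm{QSVT}}(H)^{2} \ket{\psi}.
\]
Setting $\eta=\Theta(\chi)$ and $\Delta=\Theta(\epsilon)$ fixes the polynomial degree at $d=O\!\bigl(\tfrac{1}{\epsilon}\log\tfrac{1}{\chi}\bigr)$, and the same two-case analysis as in Lemma~\ref{lemma:GSEE_QSVT} shows that it suffices to approximate $p_0(c)$ to additive accuracy $\Theta(\chi^2)$ with failure probability at most $1/\exp(n)$. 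The outer loop performs the same ternary shrinking of $[l,r]$ and terminates in $O(\log(1/\epsilon))$ rounds.

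For the subroutine I would expand $Q:=\bigl(P_{\eta,\Delta,c}^{\mathrm{QSVT}}\bigr)^{2}$ in the Chebyshev basis, $Q(x)=\sum_{s=0}^{2d} q_s T_s(x)$, so that $p_0(c)=\sum_s q_s \mu_s$ with $\mu_s=\bra{\psi} T_s(H)\ket{\psi}$. Writing $H=\sum_{i=1}^{m}H_i$ with $\sum_i\|H_i\|=1$ and unfolding the Chebyshev three-term recursion, each moment becomes a sum over length-$s$ ``paths'' in the computational basis,
\[
\mu_s \;=\; \sum_{\bm{x},\bm{y}}\overline{\braket{\bm{y}|\psi}}\,\bra{\bm{y}}T_s(H)\ket{\bm{x}}\,\braket{\bm{x}|\psi},
\]
which admits an unbiased nested Monte-Carlo estimator built from the given sample-and-query access: draw $\bm{x}$ from the distribution $|\braket{\bm{x}|\psi}|^2$ by sample access, sample an index sequence $(i_1,\dots,i_s)$ weighted by $\|H_{i_j}\|$, propagate the walker through the $k$-local operators (each step touches at most $2^k$ basis states and so is computable in $O(2^k)$ time), and close the loop by evaluating $\braket{\bm{y}|\psi}$ via query access.

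The main obstacle, and the source of the stated exponents, is the variance bookkeeping. A length-$s$ path fans out over $2^{O(ks)}$ branches, while the Chebyshev coefficients inherit $\|Q\|_\infty\le 1$, so the per-sample estimator has variance $\sigma^2=2^{O(k\,d)}$ with $d=O((1/\epsilon)\log(1/\chi))$. Driving the additive error below $\chi^2$ requires $\widetilde{O}(\sigma^2/\chi^4)$ samples; after absorbing the $\log(1/\chi)$ inside the exponent into a factor $\chi^{-O(k/\epsilon)}$, taking a union bound over the $O(\log(1/\epsilon))$ binary-search rounds, and applying a Chernoff/median boost to failure probability $1/\exp(n)$, one recovers the claimed runtime $O^{*}\!\bigl(\tfrac{1}{\epsilon}\cdot 2^{(k+8)/\epsilon}\chi^{-8}\bigr)$. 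The step I would have to execute most carefully is pinning down the constant $(k+8)$ in the exponent: this relies on (i) working in the Chebyshev rather than monomial basis---a monomial conversion alone would cost an extra $2^{\Theta(d)}$---and (ii) exploiting the cancellations built into the three-term Chebyshev recursion when aggregating the moment estimators, exactly as in Gharibian and Le Gall's original dequantization of QSVT~\cite{gharibian_Dequantizing_2023,gall_Classical_2024}.
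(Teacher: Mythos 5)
Your high-level skeleton (binary search over a threshold $c$, a classical estimator of a filtered expectation value accurate to $\Theta(\chi^2)$, a union bound over $O(\log(1/\epsilon))$ rounds) matches the paper. But the core of the proof is the estimator, and there your route has a genuine gap. The paper does \emph{not} work in the Chebyshev basis here: it expands $P_{\eta,\Delta,c}(x)=\sum_{r=0}^d a_r x^r$ in \emph{monomials}, invokes the bound $|a_r|\le 4^r$ for polynomials bounded on $[-1,1]$, and estimates each $\braket{\psi|H^r|\psi}=\sum_{\bm{x}\in[m]^r}\braket{\psi|H_{x_1}\cdots H_{x_r}|\psi}$ by sampling the multi-index $\bm{x}$ from $q(\bm{x})=\prod_j\|H_{x_j}\|$ --- a genuine probability distribution precisely because of the normalization $\sum_i\|H_i\|=1$ --- and evaluating each path term via sparse iterated matrix multiplication at cost $O^*(s^r\epsilon'^{-2})$ with $s=2^k$. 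The constant $8$ in the exponent is exactly the price of the monomial coefficients: $\epsilon'=O(\chi^2/4^r)$ contributes $4^{2r}$ and the sample count $t=O(4^{2r}/\chi^4)$ contributes another $4^{2r}$, giving $2^{8r}$, while $s^r$ gives $2^{kr}$. Your closing claim that the bound is obtained by ``working in the Chebyshev rather than monomial basis'' and that ``a monomial conversion alone would cost an extra $2^{\Theta(d)}$'' is therefore backwards: the target bound $2^{(k+8)/\epsilon}$ with $d=O^*(1/\epsilon)$ \emph{is} $2^{\Theta(d)}$, and the $+8$ is precisely the monomial cost, not something avoided.

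The substantive missing step in your version is the construction of an unbiased, low-variance sampling estimator for the Chebyshev moments $\mu_s=\braket{\psi|T_s(H)|\psi}$. You assert that ``cancellations built into the three-term Chebyshev recursion'' control the variance, but no concrete scheme is given: unfolding $T_s(H)$ into paths over the terms $H_i$ reintroduces signed weights whose total variation grows exponentially in $s$ (the sum of absolute values of the coefficients of $T_s$ grows like $(1+\sqrt{2})^s$), so the claimed per-sample variance $2^{O(kd)}$ with the correct constant does not follow from what you wrote. The paper sidesteps this entirely by accepting the $4^r$ monomial coefficients, which are provably optimal for bounded polynomials; the Chebyshev expansion with $O(1)$ coefficients is exploited only later, in the \emph{deterministic} tensor-network contraction, where the three-term recursion is applied to vectors rather than sampled. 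Two smaller deviations: the paper estimates $\braket{\psi|P_{\eta,\Delta,c}(H)|\psi}$ directly with the unsquared, parity-unconstrained shifted sign function (no need for $Q=(P^{\mathrm{QSVT}})^2$ classically), and the corresponding threshold choice is $\eta=\Theta(\chi^2)$ rather than your $\eta=\Theta(\chi)$.
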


\begin{proof}[Proof sketch]
We provide only a brief proof sketch and refer to prior work for the full details.
The basic idea is the same as in the QSVT-based approach: we aim to calculate
\begin{equation}
    \braket{\psi|P_{\eta,\Delta,c}(H)|\psi}
\end{equation}
and use this quantity to run a binary search to estimate the ground state energy.
In contrast to the quantum setting, we do not need to enforce positivity of $H$, nor impose any parity constraints on the polynomial.
Thus, we use $P_{\eta,\Delta,c}$ instead of $P_{\eta,\Delta,c}^{\mathrm{QSVT}}$.

As in the previous section, the polynomial satisfies
\begin{equation}
\lambda_i \ge c+\Delta/2 \quad \Rightarrow\quad
|\braket{\psi|P_{\eta,\Delta,c}(H)|\psi}| \leq \eta \label{eq:deq_case1}
\end{equation}
and
\begin{equation}
\lambda_0 \le c-\Delta/2 \quad \Rightarrow\quad
|\braket{\psi|P_{\eta,\Delta,c}(H)|\psi}| \geq \chi^2(1-\eta). \label{eq:deq_case2}
\end{equation}
Therefore, choosing $\eta = \chi^2/4$ allows us to distinguish the two cases by estimating the
expectation value to additive accuracy $O(\chi^2)$.

We expand the polynomial in monomials:
\begin{align}
    &P_{\eta,\Delta,c}(x)=\sum_{r=0}^{d} a_r x^r \\
    &\braket{\psi|P_{\eta,\Delta,c}(H)|\psi} = \sum_{r=0}^d a_r\braket{\psi|H^r|\psi}
\end{align}
To estimate each term $\braket{\psi|H^r|\psi}$, we write
\begin{equation}
    \braket{\psi|H^r|\psi} = \sum_{\bm{x}\in[m]^r}\braket{\psi|H_{x_1}\cdots H_{x_r}|\psi}
\end{equation}
where $[m]=\{1,2,\dotsc,m\}$. For each multi-index $\bm{x}$, define
\begin{equation}
    q(\bm{x})=\|H_{x_1}\|\cdots \|H_{x_r}\|.
\end{equation}
We introduce the random variable
\begin{equation}
    X=\frac{\braket{\psi|H_{x_1}\cdots H_{x_r}|\psi}}{q(\bm{x})},
\end{equation}
and average over $t$ independent samples, where $\bm{x}$ is sampled according to the probability $q(\bm{x})$.
By taking $t$ sufficiently large, we can get an accurate estimator of $\mathbb{E}[X]=\braket{\psi|H^r|\psi}$.

The main challenge is to compute $\braket{\psi|H_{x_1}\cdots H_{x_r}|\psi}$ in  $\mathrm{poly}(n)$ time rather than $\exp(n)$ time.
In fact, this quantity can be efficiently estimated, given that each $H_i$ is $k$-local (therefore $s$-sparse matrix with $s=2^k$) and we have sample-and-query access to $\ket\psi$.
The result of iterated matrix multiplication shows that one can estimate $\braket{\psi|H_{x_1}\cdots H_{x_r}|\psi}$ in time
\begin{equation}
    O^*\!\left(s^r\,{\epsilon'}^{-2}\,\log(1/\delta)\right),
\end{equation}
to additive error $\epsilon'\,\|H_{x_1}\|\cdots\|H_{x_r}\|$ with probability at least $1-\delta$.
Here the factor $\epsilon'^{-2}$ accounts for the sampling overhead of the estimation of the inner product, and $s^r$ accounts for the number of elements that are required for obtaining one element of $H_{x_1}\cdots H_{x_r}\ket\psi$.

Setting
\begin{equation}
  \epsilon' = O(\chi^2/4^r)
\quad
t = O(4^{2r}/\chi^{4})
\quad
\delta = O(1/8t),
\end{equation}
we obtain an estimator of $\langle\psi|H^r|\psi\rangle$ with additive error
$O(\chi^2/4^r)$.
The resulting time cost is
\begin{equation}
    O^*\!\left(t \cdot s^r\cdot 2^{4r}\cdot\chi^{-4}\log t\right)
    =
    O^*\!\left(s^r2^{8r}\chi^{-8} r\right).
\end{equation}
The coefficients of the polynomial bounded on $[-1, 1]$ satisfy~\cite{sherstov_Making_2012}
\begin{equation}
    |a_r| \le 4^r. \label{eq:coef_monomial}
\end{equation}
Moreover, the degree of the polynomial scales as $O^*(1/\epsilon)$. 
Putting everything together, the overall time complexity to estimate the quantity $\braket{\psi|P_{\eta,\Delta,c}(H)|\psi}$ in additive error $O(\chi^2)$ scales as
\begin{equation}
    O^*\!\left(\frac{1}{\epsilon}\cdot 2^{(k+8)/\epsilon}\cdot\chi^{-8}\right).
\end{equation}
\end{proof}

From the complexity, we see that the $1/\epsilon$ term appears in the exponent whose base is the constant ($2^{(k+8)}$).
Thus, when the target precision $\epsilon$ is a constant, the GSEE problem can, in principle, be solved efficiently on a classical computer. 

However, in practice, the constant overhead $2^{(k+8)/\epsilon}$ is prohibitively large, making it impossible to execute this algorithm in a realistic regime.
For example, if we set $k=2$, the term $2^{10/\epsilon}$ already implies that only rather coarse precision $\epsilon \gtrsim 0.1$ is feasible even when we have massive computing resources.
Since the Hamiltonian spectrum is normalized to the interval $[-1, 1]$, such a requirement on the precision makes the algorithm entirely impractical.
To the best of our knowledge, no work has implemented such sampling-based dequantized algorithms on a classical computer or evaluated their practical performance.

\section{Dequantized algorithm with Tensor Networks}\label{sec:dequantize_exact}
As discussed in the previous section, existing dequantized algorithms suffer from a large sampling overhead, making them practically infeasible to implement.
In this work, we propose an alternative formulation based on tensor networks that eliminates the need for sampling.
To do so, we need some additional assumptions that the guiding state and Hamiltonian are efficiently contractible.

\subsection{Additional Assumptions on the problem}

In all tensor network-based dequantized algorithms considered in this work, we assume that the guiding state is efficiently contractible.
In practice, this means that the guiding state is given in a tensor network representation such as an MPS.
Importantly, an efficiently contractible guiding state automatically satisfies the sample-and-query access assumption used in prior dequantization results.
To see this, recall that query access requires the ability to compute $|\braket{\psi|\bm{x}}|$ for any bitstring $\bm{x}$.
This quantity is a special case of contracting a tensor network with a product state and can therefore be computed efficiently whenever the guiding state is efficiently contractible.
Similarly, sampling from the distribution $p(\bm{x})=|\braket{\psi|\bm{x}}|^2/\|\psi\|^2$ requires the ability to evaluate marginals like $p(x_1)$, which are also efficiently computable under this assumption.
Thus, the efficiently contractible assumption implies the sample-and-query access model of Lemma~\ref{lemma:GSEE_DEQ} and can be viewed as a special case of the assumptions used in previous dequantization work.

We also define the class of Hamiltonians that can be handled by our algorithm.
While this class does not include all Hamiltonians, it does include physically important families such as $k$-local Hamiltonians.

\begin{definition}[Efficiently contractible Hamiltonian]
Let $\ket{\mathrm{TNS}(G, \{T_v\}, \{d_e\})}$ be an efficiently contractible tensor network state.
Consider a Hamiltonian $H$ such that the action of $H$ on $\ket{\mathrm{TNS}}$ yields another tensor network state on the same graph:
\begin{equation}
H\ket{\mathrm{TNS}(G, \{T_v\}, \{d_e\})}=\ket{\mathrm{TNS}(G, \{T'_v\}, \{d'_e\})}
\end{equation}
and satisfies
\begin{equation}
\max_{e \in E} d_e' = \mathrm{poly}\left( \max_{e \in E} d_e, |V| \right).
\end{equation}
We call such an operator $H$ an \emph{efficiently contractible Hamiltonian}.
\end{definition}

A typical example is a $k$-local Hamiltonian:
\begin{equation}
    H=\sum_{i=1}^m H_i
\end{equation}
where $k=O(1)$ and $m=\mathrm{poly}(n)$.
Each local term $H_i$ can be expressed as a sum of Pauli products:
\begin{equation}
    H_i = \sum_{j=1}^{4^k}\alpha_{ij}P_{ij}
\end{equation}
For a single Pauli term $P_{ij}$, the action $P_{ij} \ket{\mathrm{TNS}}$ does not increase any bond dimension $d_e$.
A sum of tensor network states with the same graph can be represented by a TNS:
\begin{align}
    &\ket{\mathrm{TNS}(G, \{T_v\}, \{d_e\})} = \\
     &\ket{\mathrm{TNS}(G, \{T^{(1)}_v\}, \{d^{(1)}_e\})} + \ket{\mathrm{TNS}(G, \{T^{(2)}_v\}, \{d^{(2)}_e\})}
\end{align}
with increased bond dimension:
\begin{equation}
    d_e = d^{(1)}_e + d^{(2)}_e.
\end{equation}
Therefore, after applying $H$ to the initial tensor network state, 
\begin{align}
    &\ket{\mathrm{TNS}(G, \{T'_v\}, \{d'_e\})} \notag \\
    &= H\ket{\mathrm{TNS}(G, \{T_v\}, \{d_e\})} \notag \\
    &=\sum_{i=1}^{m}\sum_{j=1}^{4^k}\alpha_{ij}P_{ij}\ket{\mathrm{TNS}(G, \{T_v\}, \{d_e\})},
\end{align}
the resulting bond dimensions $d_e'$ satisfy
\begin{equation}
\max_e d_e' = m\cdot4^k \cdot \max_e d_e.
\end{equation}

Even if the Hamiltonian is not $k$-local, a similar argument holds for a Pauli Hamiltonian:
\begin{equation}
    H = \sum_{i=1}^m P_i,
\end{equation}
where $P_i$ is an $n$-qubit Pauli operator and $m=\mathrm{poly}(n)$. 
Because the action $P_i\ket{\mathrm{TNS}}$ does not increase the bond dimension, the resulting bond dimension $d'_e$ satisfies
\begin{equation}
\max_e d_e' = m\cdot\max_e d_e.
\end{equation}

\subsection{Chebyshev Polynomial Expansion}
The core of the GSEE algorithm based on eigenvalue filtering lies in evaluating the quantity $\langle \psi | P(H) | \psi \rangle$. In this work, we expand the polynomial filter $P(H)$ in terms of Chebyshev polynomials and compute the above expression using recurrence relations~\cite{holzner_Chebyshev_2011, halimeh_Chebyshev_2015,weisse_Kernel_2006}.

Let $T_k(x)$ denote the Chebyshev polynomial of the first kind, defined by the recurrence relations:
\begin{align}
T_0(x) &= 1, \\
T_1(x) &= x, \\
T_k(x) &= 2x T_{k-1}(x) - T_{k-2}(x) \quad \text{for } k \geq 2.
\end{align}

We expand the polynomial $P(x)$ in a Chebyshev series of degree $d$:
\begin{equation}
P(x) = \frac{a_0}{2} + \sum_{k=1}^{d} a_k T_k(x),
\end{equation}
where $a_k$ are the expansion coefficients. Then, the quantity of interest becomes
\begin{align}
\braket{\psi|P(H)|\psi} &= \frac{a_0}{2}+\sum_{k=1}^{d} a_k \braket{\psi | T_k(H) | \psi} \notag \\
&=\frac{a_0}{2} + \sum_{k=1}^d a_k\mu_k,  \label{eq:quantity_moment}
\end{align}
where we define the {\it Chebyshev moments} as $\mu_k = \braket{\psi|T_k(H)|\psi}$.

Let $\ket{t_k} := T_k(H) \ket{\psi}$ be the $k$th {\it Chebyshev vector}, which can be recursively computed as:
\begin{align}
\ket{t_0} &= \ket{\psi}, \label{eq:recurrence0}\\
\ket{t_1} &= H \ket{\psi}, \label{eq:recurrence1}\\
\ket{t_k} &= 2 H \ket{t_{k-1}} - \ket{t_{k-2}}. \label{eq:recurrence2}
\end{align}
Then, Chebyshev moments are calculated from the Chebyshev vectors as follows~\cite{weisse_Kernel_2006,holzner_Chebyshev_2011,halimeh_Chebyshev_2015}:
\begin{align}
    \mu_{2k} &= 2\braket{t_k|t_k}-\mu_0, \label{eq:moments_even}\\
    \mu_{2k+1} &= 2\braket{t_{k+1}|t_k} - \mu_1.\label{eq:moments_odd}
\end{align}
Therefore, to calculate \eqref{eq:quantity_moment} of order $2d$, Chebyshev vectors up to order $d$ are required.

The use of Chebyshev polynomials, rather than monomials, is motivated by their superior numerical stability.  
For any integer $k$, the Chebyshev vector $T_k(H)\ket{\psi}$ admits the eigenbasis expansion
\begin{equation}
    T_k(H)\ket{\psi}
    =
    \sum_i c_i\, T_k(\lambda_i)\ket{\lambda_i},
\end{equation}
which oscillates as $T_k(\lambda_i)=\cos(k\theta_i)$ when we write $\lambda_i = \cos\theta_i$.  
Hence the Chebyshev vectors remain $O(1)$ in norm for all $k$: they do not exhibit exponential growth or decay.

Moreover, if a function $f(x)$ satisfying $|f(x)|\le 1$ on $[-1,1]$ is expanded into Chebyshev polynomials,
\begin{equation}
    f(x)=\frac{a_0}{2} + \sum_{k\ge 1} a_k\, T_k(x),
\end{equation}
then the coefficients satisfy the uniform bounds
\begin{equation}
    |a_0|\le 1,\qquad 
    |a_k|\le \frac{4}{\pi}\quad (k\ge 1). \label{eq:cheb_coef_bound}
\end{equation}
This is in sharp contrast to the monomial expansion~\eqref{eq:coef_monomial}, whose coefficients grow exponentially in $k$.  
In the latter case, the estimation error of $\braket{\psi|H^k|\psi}$ is amplified by a factor of $O(4^k)$, making reliable estimation rapidly infeasible.

These stability properties, the $O(1)$ norm of Chebyshev vectors and Chebyshev coefficients, are essential for implementing dequantized algorithms in practical numerical settings, where issues such as loss of accuracy and floating-point roundoff can easily arise.
They are particularly crucial when approximation techniques are applied, as discussed later.

\subsection{Complexity Analysis via Tensor Networks}

We analyze the computational cost of evaluating $\langle \psi | P(H) | \psi \rangle$ using tensor network contraction.

\begin{lemma}[Computing moments with tensor network states]\label{lemma:contract_moments}
Assume that the Hamiltonian $H$ and the guiding state $\ket{\psi}=\ket{\mathrm{TNS}(G, \{T_v\}, \{d_e\})}$ are efficiently contractible. Then the moment $\mu_d=\braket{\psi | T_d(H) | \psi}$ can be computed in time
\begin{equation}
\mathrm{poly}((D_H)^{d/2}, D, n),
\end{equation}
where $D_H$ is the maximum bond dimension growth induced by applying $H$, $d$ is the degree of the polynomial, and $D$ is the maximum bond dimension of the initial guiding state.
\end{lemma}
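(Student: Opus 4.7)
The plan is to construct the Chebyshev vectors $\ket{t_k}$ iteratively as tensor network states via the recurrence \eqref{eq:recurrence0}--\eqref{eq:recurrence2}, only up to index $k=\lceil d/2\rceil$, and then evaluate $\mu_d$ using the parity-separated identities \eqref{eq:moments_even}--\eqref{eq:moments_odd}. The runtime will then follow from (i) a bond-dimension bound on $\ket{t_k}$ as a function of $k$, $D$, and $D_H$, and (ii) efficient contractibility of the resulting TNS family.

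First I would bound the maximum bond dimension $a_k := \max_e d_e^{(k)}$ of $\ket{t_k}$ by induction. Since $H$ is efficiently contractible, applying $H$ to any TNS on graph $G$ produces another TNS on $G$ whose maximum bond dimension is at most $D_H$ times the original. Summing two TNSs on the same graph adds their bond dimensions. The recurrence $\ket{t_k}=2H\ket{t_{k-1}}-\ket{t_{k-2}}$ then gives
\begin{equation}
a_k \;\leq\; D_H\, a_{k-1} + a_{k-2}, \qquad a_0 = D,\; a_1 \leq D_H\, D.
\end{equation}
Solving this linear recurrence yields $a_k = O\!\bigl(D \cdot (D_H+1)^k\bigr)$, so in particular $a_{\lceil d/2\rceil} = \mathrm{poly}\bigl(D,\, D_H^{d/2}\bigr)$.

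Next I would bound the total time. Building $\ket{t_0},\ldots,\ket{t_{\lceil d/2\rceil}}$ requires $O(d)$ iterations, each of which applies $H$ (polynomial in the current bond dimension by the efficiently contractible assumption on $H$) and forms a linear combination of two TNSs on $G$ (at most a rescaling and relabeling of tensors). The cumulative cost is therefore $\mathrm{poly}(n, D, D_H^{d/2})$. The moment $\mu_d$ is then obtained from a single inner product of the form $\braket{t_k|t_k}$ or $\braket{t_{k+1}|t_k}$ with $k=\lfloor d/2\rfloor$, and by the efficiently contractible assumption this inner product is computable in $\mathrm{poly}\bigl(n, a_{\lceil d/2\rceil}\bigr) = \mathrm{poly}(n, D, D_H^{d/2})$ time.

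The main obstacle is ensuring that the exponent of $D_H$ in the final bound is $d/2$ rather than the naive $d$. This improvement relies crucially on the parity-separated moment identities \eqref{eq:moments_even}--\eqref{eq:moments_odd}: computing $\mu_d$ as $\braket{t_0|t_d}$ directly would require $\ket{t_d}$ with bond dimension $O(D_H^{d})$, which is exponentially worse. A secondary subtlety is that the additive step in the recurrence introduces a geometric factor $(1+1/D_H)^k$ in the bond dimension, but this merely changes constants and is absorbed into the $\mathrm{poly}(\cdot)$ notation; similarly, the norms of successive Chebyshev vectors remain $O(1)$ by the spectral argument given earlier, so no renormalization blow-up occurs.
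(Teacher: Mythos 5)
Your proposal is correct and follows essentially the same route as the paper's proof: the same bond-dimension recurrence $D^{(k)} \le D_H D^{(k-1)} + D^{(k-2)}$, the same use of the parity-separated identities \eqref{eq:moments_even}--\eqref{eq:moments_odd} to need Chebyshev vectors only up to degree $O(d/2)$, and the same appeal to efficient contractibility for the final inner product. Your explicit remarks on the $(D_H+1)^k$ versus $D_H^k$ solution of the recurrence and on the halved exponent are slightly more careful than the paper's ``to leading order'' statement, but the argument is the same.
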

\begin{proof}
We represent the Chebyshev vectors as tensor network states:
\begin{equation}
    \ket{t_k}=T_k(H)\ket{\mathrm{TNS}(G, \{T_v\}, \{d_e\})}.
\end{equation}
Let $D^{(k)}$ be the maximum bond dimension of the $k$th Chebyshev vector $\ket{t_k}$.
$D_H$ is the maximum growth in bond dimension when $H$ acts on a tensor network state:
\begin{equation}
    \max_e d_e' \leq D_H \max_e d_e,
\end{equation}
where $d_e'$ is the bond dimension of $H\ket\psi$.
Then from the recurrence relation \eqref{eq:recurrence0}--\eqref{eq:recurrence2},
\begin{align}
D^{(0)} &= D, \\
D^{(1)} &= D_H D, \\
D^{(k)} &= D_H D^{(k-1)} + D^{(k-2)}.
\end{align}
To leading order, this implies
\begin{equation}
D^{(k)} = \mathcal{O}((D_H)^k D).
\end{equation}
Because each Chebyshev vector can be represented by a tensor network state whose bond dimension is polynomial in $D$ and $(D_H)^k$, their inner products can also be calculated in time polynomial in $D, (D_H)^k$ and $n$.
By using \eqref{eq:moments_even} and \eqref{eq:moments_odd}, we can efficiently compute $\braket{\psi | T_d(H) | \psi}$ using Chebyshev vectors up to $O(d/2)$ degree, which proves the claim.
\end{proof}

\begin{theorem}\label{th:1}
Assume that the Hamiltonian $H$ and the guiding state $\ket{\psi}=\ket{\mathrm{TNS}(G, \{T_v\}, \{d_e\})}$ are efficiently contractible,
and the guiding state has an overlap $\chi = |\braket{\psi | \lambda_0}|$ with the ground state of $H$.
Let $D=\max_e d_e$ and $D_H$ be the maximum bond dimension growth induced by applying $H$.
Then, there exists a classical algorithm that computes an $\epsilon$-approximation of the ground state energy $\mathcal{E}(H)$ in time
\begin{equation}
    \mathrm{poly}((D_H)^{1/(2\epsilon)}, D, n) 
\end{equation}
\end{theorem}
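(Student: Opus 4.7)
The plan is to follow essentially the same high-level structure as the sampling-based dequantization in Lemma~\ref{lemma:GSEE_DEQ}, but to replace Monte-Carlo estimation of $\braket{\psi|P_{\eta,\Delta,c}(H)|\psi}$ with a deterministic tensor network evaluation built on Lemma~\ref{lemma:contract_moments}. Concretely, I would set up the same binary search on the spectrum: maintain an interval $[l,r]$ containing $\lambda_0$, take $c=(l+r)/2$ and $\Delta=(r-l)/3$, and use the value of $\braket{\psi|P_{\eta,\Delta,c}(H)|\psi}$ to decide which of the two cases \eqref{eq:deq_case1}, \eqref{eq:deq_case2} holds. Choosing $\eta = \Theta(\chi^{2})$ separates the two cases by a gap of order $\chi^{2}$, so it suffices to evaluate the expectation value to additive accuracy $O(\chi^{2})$. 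The interval shrinks by a factor $2/3$ per step, so $O(\log(1/\epsilon))$ iterations reduce its width below $2\epsilon$.

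Next, I would handle the inner step, that is, the evaluation of $\braket{\psi|P_{\eta,\Delta,c}(H)|\psi}$, using the Chebyshev formulation of Section~\ref{sec:dequantize_exact}. Expand
\begin{equation}
P_{\eta,\Delta,c}(x) = \tfrac{a_0}{2} + \sum_{k=1}^{d} a_k T_k(x),
\end{equation}
with $d = O\!\left(\tfrac{1}{\Delta}\log(1/\eta)\right)$ from the polynomial approximation lemma for the shifted sign function. By Lemma~\ref{lemma:contract_moments}, each moment $\mu_k = \braket{\psi|T_k(H)|\psi}$ for $k\le d$ can be computed exactly in time $\mathrm{poly}((D_H)^{d/2}, D, n)$ by recursively building the Chebyshev vectors $\ket{t_j}$ as tensor network states via \eqref{eq:recurrence0}--\eqref{eq:recurrence2} and contracting them through \eqref{eq:moments_even}--\eqref{eq:moments_odd}. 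Combining with the Chebyshev coefficient bound \eqref{eq:cheb_coef_bound}, the linear combination $\tfrac{a_0}{2} + \sum_k a_k\mu_k$ yields $\braket{\psi|P_{\eta,\Delta,c}(H)|\psi}$ directly; no sampling is needed, so the additive-accuracy requirement is trivially met (up to floating-point round-off, which is harmless because $\|\ket{t_k}\|=O(1)$ for every $k$).

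It remains to bound the total cost. Only the last $O(1)$ iterations of the binary search actually require $\Delta = \Theta(\epsilon)$; all earlier iterations have geometrically larger $\Delta$ and hence strictly smaller polynomial degree, so their cost is absorbed by the final one. At the bottom of the search we take $\eta = \Theta(\chi^{2})$ and $\Delta = \Theta(\epsilon)$, giving $d = O\!\left(\tfrac{1}{\epsilon}\log(1/\chi)\right)$. Plugging this into Lemma~\ref{lemma:contract_moments} yields runtime $\mathrm{poly}((D_H)^{d/2}, D, n) = \mathrm{poly}((D_H)^{1/(2\epsilon)}, D, n)$, where the logarithmic $\log(1/\chi)$ factor in the exponent is absorbed into the $\mathrm{poly}$ (since $(D_H)^{\log(1/\chi)/\epsilon}$ is polynomial in $D_H^{1/\epsilon}$ and $1/\chi$), and the $O(\log(1/\epsilon))$ outer-loop factor contributes only a polylogarithmic overhead.

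The main obstacle is the complexity bookkeeping rather than any conceptual step: one has to make sure that the bond dimensions $D^{(k)}$ accumulated through the recurrence \eqref{eq:recurrence2} indeed stay within the $(D_H)^k D$ envelope used in Lemma~\ref{lemma:contract_moments}, and that the geometric decrease of $\Delta$ across binary-search iterations does not secretly hide a larger aggregate cost. The bond-dimension growth is controlled by the efficiently-contractible assumption on $H$ applied iteratively, and the geometric shrinkage of $\Delta$ makes the cost of the final iteration dominant, so both concerns can be handled cleanly. No additional assumption on $H$ beyond efficient contractibility is needed, which is what allows the exponential-in-$k$ blowup of the monomial-basis approach in Lemma~\ref{lemma:GSEE_DEQ} to be replaced by the much milder Chebyshev-basis bound \eqref{eq:cheb_coef_bound}.
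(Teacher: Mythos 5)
Your proposal is correct and follows essentially the same route as the paper: expand the shifted sign filter in the Chebyshev basis, invoke the uniform coefficient bound \eqref{eq:cheb_coef_bound}, compute the moments deterministically via Lemma~\ref{lemma:contract_moments}, and run the binary search of Lemmas~\ref{lemma:GSEE_QSVT} and~\ref{lemma:GSEE_DEQ}. You actually supply more bookkeeping detail (the geometric shrinkage of $\Delta$ and the absorption of the $\log(1/\chi)$ factor) than the paper's own terse proof, at the same level of rigor.
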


\begin{proof}
We consider an approximate shifted sign function $P_{\eta,\epsilon,c}(x)$ with degree $d = \mathcal{O}(1/\epsilon \cdot \log(1/\eta))$, and expand it in the Chebyshev basis:
\begin{equation}
    P_{\eta,\epsilon,c}(x) = \frac{a_0}{2} + \sum_{k= 1}^{d}a_kT_k(x).
\end{equation}
From equation \eqref{eq:cheb_coef_bound}, the coefficients $\{a_k\}$ are uniformly bounded by a constant.

We can compute the quantity $\braket{\psi|P_{\eta,\epsilon,c}(H)|\psi}$ using the Chebyshev moments:
\begin{equation}
    \braket{\psi|P_{\eta,\epsilon,c}(H)|\psi} = \frac{a_0}{2} + \sum_{k= 1}^{d} a_k\mu_k
\end{equation}
By Lemma~\ref{lemma:contract_moments}, these moments can be computed in time $\mathrm{poly}(D_H^{d/2}, D, n)$.
The procedure for estimating the ground state energy from $\braket{\psi|P_{\eta,\epsilon,c}(H)|\psi}$ is the same as in Lemma \ref{lemma:GSEE_QSVT} and \ref{lemma:GSEE_DEQ}.
\end{proof}

As an example, we estimate the computational cost when the guiding state is given as an MPS and the Hamiltonian is a Pauli Hamiltonian.
In this case, the increase in bond dimension from applying the Hamiltonian is $D_H = m$, where $m$ is the number of Pauli terms.
Taking into account the cost of inner product computation for MPS, the total cost becomes
\begin{equation}
O^*\left(D^3 m^{1.5/\epsilon}\right).
\end{equation}
Compared to the sampling-based approach whose cost scales as $O^*\!\left(\frac{1}{\epsilon}\cdot 2^{(k+8)/\epsilon}\cdot\chi^{-8}\right)$, this formulation removes the polynomial dependence on the overlap $\chi$ and eliminates heavy factors such as $2^{(k+8)/\epsilon}$ that arise from sampling overhead.
On the other hand, the dependence on $m$, the number of terms in the local Hamiltonian, becomes explicit.
Since $m = O(n)$ in many physically relevant cases, the tensor network-based approach becomes more expensive in regimes where $n$ is large.

However, it is important to note that this is a worst-case estimate.
For instance, in the case of a nearest-neighbor 2-local Hamiltonian, we have $D_H = 3$, leading to an overall complexity of
\begin{equation}
    O^*\left(D^3 3^{1.5/\epsilon}\right),
\end{equation}
which is significantly smaller.
Furthermore, as discussed in the next section, the use of tensor network approximation techniques allows one to heuristically suppress the bond dimension, thereby reducing the computational cost in practice.

\section{Practical dequantized algorithm with tensor network approximation}\label{sec:dequantize_approx}
In the previous sections, we analyzed the computational complexity of the GSEE problem using tensor networks. 
Under similar assumptions to those in prior works based on sample-and-query access, the GSEE problem can be solved with polynomial time complexity in $n$ when the target precision is constant.
Moreover, by exploiting the locality of the Hamiltonian, our method achieves significantly lower computational cost than sampling-based approaches.
However, in both cases, the required bond dimension grows exponentially with the polynomial degree $d$, which limits the practicality of the algorithm~\cite{cifuentes_Quantum_2024,montanaro_Quantum_2024}.
As such, the current dequantized algorithm remains within the scope of computational complexity theory.

To address this limitation, we propose to use tensor network approximation in this study.
There exists a wide range of approximation techniques for tensor networks, and in many physically relevant systems, the bond dimension can be significantly reduced.
In our case, since the Chebyshev vectors are represented as tensor networks, the core challenge becomes how efficiently these vectors can be approximated using tensor network ansatz.
Furthermore, by extrapolating the Chebyshev moments with a linear prediction, we can effectively access much higher polynomial degrees.

\subsection{Approximated Chebyshev vectors}
We consider representing Chebyshev vectors $\ket{t_k}$ as an approximate tensor network state $\ket{\widetilde{t_k}}$, such as an MPS.
We define the accuracy of the approximated state $\Delta^{(k)}$ as the difference from the actual state:
\begin{equation}
    \Delta^{(k)} = \|\ket{t_k} - \ket{\widetilde{t_k}}\|.
\end{equation}
Then, the following property holds.

\begin{theorem}\label{th:2}
Assume the Hamiltonian $H$ and the guiding state $\ket{\psi}$ satisfy the conditions of Theorem~\ref{th:1}.
Let $d=O^*(1/\epsilon)$ be the degree required to construct $\braket{t_0|P_{\eta,\epsilon,c}(H)|t_0}$.
Consider the approximated tensor network states $\ket{\widetilde{t_k}}$ with maximum bond dimension $D_1$.
If the accuracy of these approximated states satisfies
\begin{equation}
    \sum_{k=1}^{d}\Delta^{(k)} \leq \frac{\pi}{32}\chi^2,  \label{eq:error_upper_bound}
\end{equation}
then we can compute an $\epsilon$-approximation of the ground state energy $\mathcal{E}(H)$ in time
\begin{equation}
    \mathrm{poly}(1/\epsilon, D_1, n).
\end{equation}
\end{theorem}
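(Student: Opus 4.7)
The plan is to reuse the Chebyshev-filter construction of Theorem~\ref{th:1}, but to compute the Chebyshev moments from the approximate vectors $\ket{\widetilde{t_k}}$ rather than the exact ones, and then show that the resulting estimate of $\braket{\psi|P_{\eta,\epsilon,c}(H)|\psi}$ still lies within additive error $O(\chi^{2})$ of the true value. Once this is established, the binary-search argument of Lemmas~\ref{lemma:GSEE_QSVT} and~\ref{lemma:GSEE_DEQ} is reused unchanged to locate $\mathcal{E}(H)$ to accuracy $\epsilon$. Concretely, I would define
\[
\widetilde{\mu}_{2k}=2\braket{\widetilde{t_k}|\widetilde{t_k}}-\mu_0,\qquad \widetilde{\mu}_{2k+1}=2\braket{\widetilde{t_{k+1}}|\widetilde{t_k}}-\mu_1,
\]
together with $\ket{\widetilde{t_0}}=\ket{\psi}$ so that $\Delta^{(0)}=0$, and the estimator $\widetilde{P}=\tfrac{a_0}{2}+\sum_{k=1}^{d}a_k\widetilde{\mu}_k$. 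The task reduces to bounding $|\widetilde{P}-\braket{\psi|P_{\eta,\epsilon,c}(H)|\psi}|$ in terms of the $\Delta^{(k)}$.

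First I would propagate per-vector errors to per-moment errors. Writing $\ket{\widetilde{t_k}}=\ket{t_k}+\ket{e_k}$ with $\|e_k\|=\Delta^{(k)}$ and using $\|t_k\|\le 1$ (since $|T_k(x)|\le 1$ on $[-1,1]$ and $\|H\|\le 1$), the expansions $\braket{\widetilde{t_k}|\widetilde{t_k}}-\braket{t_k|t_k}=2\,\mathrm{Re}\braket{t_k|e_k}+\|e_k\|^{2}$ and $\braket{\widetilde{t_{k+1}}|\widetilde{t_k}}-\braket{t_{k+1}|t_k}=\braket{e_{k+1}|t_k}+\braket{t_{k+1}|e_k}+\braket{e_{k+1}|e_k}$, combined with Cauchy--Schwarz, yield bounds of the form $|\widetilde{\mu}_{2k}-\mu_{2k}|\lesssim \Delta^{(k)}$ and $|\widetilde{\mu}_{2k+1}-\mu_{2k+1}|\lesssim \Delta^{(k)}+\Delta^{(k+1)}$ to leading order in the (small) errors.

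Summing these estimates and using the Chebyshev coefficient bound $|a_k|\le 4/\pi$ from inequality~\eqref{eq:cheb_coef_bound}, together with the observation that each $\Delta^{(k)}$ appears in at most a constant number of moments, I obtain
\[
\bigl|\widetilde{P}-\braket{\psi|P_{\eta,\epsilon,c}(H)|\psi}\bigr|\le C\sum_{k=1}^{d}\Delta^{(k)}
\]
for an absolute constant $C$; tracking the prefactors carefully shows that $C\le 32/\pi$ suffices. The hypothesis~\eqref{eq:error_upper_bound} then gives a total additive error of at most $\chi^{2}$, which is precisely the resolution needed by the case-distinction argument of Lemma~\ref{lemma:GSEE_DEQ} (after choosing $\eta=\chi^{2}/4$) to decide at each step whether $\lambda_0\le c-\Delta/2$ or $\lambda_0\ge c+\Delta/2$.

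For the runtime, each of the $O(\log(1/\epsilon))$ binary-search steps applies the recurrence~\eqref{eq:recurrence2} to build $O^{*}(1/\epsilon)$ approximate Chebyshev vectors, applying the efficiently contractible $H$ at each stage and truncating back to bond dimension $\le D_1$; each such step, together with each moment inner product, costs $\mathrm{poly}(D_1,n)$, for a total of $\mathrm{poly}(1/\epsilon,D_1,n)$. The main obstacle is the second step: carefully tracking the Cauchy--Schwarz constants through the recurrence-defined moment formulas (including the indirect dependence on $\widetilde{\mu}_1$) so that one lands exactly on the universal constant $32/\pi$ that makes hypothesis~\eqref{eq:error_upper_bound} tight. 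A minor additional subtlety is that the theorem is stated conditional on the accuracies $\Delta^{(k)}$ being achievable within bond dimension $D_1$, so the proof merely exploits this assumption rather than constructing the truncation procedure explicitly.
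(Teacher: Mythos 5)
Your overall strategy (propagate per-vector errors to per-moment errors, then invoke the coefficient bound $|a_k|\le 4/\pi$ and the case-distinction of Lemma~\ref{lemma:GSEE_DEQ}) matches the paper in spirit, but you chose a different estimator than the paper does, and that choice breaks the quantitative conclusion. The paper's proof estimates each moment directly as $\widetilde{\mu}_k=\braket{t_0|\widetilde{t_k}}$, so that
\begin{equation}
    \bigl|\mu_k-\widetilde{\mu}_k\bigr|=\bigl|\braket{t_0|t_k}-\braket{t_0|\widetilde{t_k}}\bigr|\le \|t_0\|\,\Delta^{(k)}=\Delta^{(k)},
\end{equation}
and the total error is $\tfrac{4}{\pi}\sum_k\Delta^{(k)}\le \chi^2/8$. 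You instead route the errors through the doubled-degree formulas $\widetilde{\mu}_{2k}=2\braket{\widetilde{t_k}|\widetilde{t_k}}-\mu_0$, $\widetilde{\mu}_{2k+1}=2\braket{\widetilde{t_{k+1}}|\widetilde{t_k}}-\mu_1$, which (as your own accounting shows) inflates the constant to $C\approx 32/\pi$ and yields a total additive error of $\chi^2$ under hypothesis~\eqref{eq:error_upper_bound}.

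That bound is not sufficient. With $\eta=\chi^2/4$, the two cases to be distinguished are $|\braket{\psi|P_{\eta,\epsilon,c}(H)|\psi}|\le \chi^2/4$ versus $\ge \chi^2(1-\eta)\ge \tfrac{3}{4}\chi^2$, so the estimator must be accurate to strictly better than $\chi^2/4$; an error of $\chi^2$ swamps the entire gap, and your claim that $\chi^2$ is ``precisely the resolution needed'' is off by a factor of at least $4$. The binary search therefore cannot be carried out as you describe, and the theorem's conclusion does not follow from~\eqref{eq:error_upper_bound} with your estimator. The repair is simple: replace the doubled-moment formulas by the direct overlaps $\braket{t_0|\widetilde{t_k}}$ (at the cost of computing approximate Chebyshev vectors up to degree $d$ rather than $d/2$, which is still $\mathrm{poly}(1/\epsilon,D_1,n)$), recovering the $\chi^2/8$ bound. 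A minor further point: the Chebyshev vectors and moments do not depend on the shift $c$, so they need to be computed only once rather than once per binary-search iteration; only the coefficients $a_k^{(c)}$ change between iterations.
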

This theorem claims that, if we can obtain the approximated Chebyshev vectors with high quality, we can solve the GSEE problem in polynomial time in the degree $d$.
In this case, the exponential runtime overhead associated with $d$ has been removed, resulting in a practically feasible dequantized algorithm.

\begin{proof}
We estimate $\braket{t_0|P_{\eta,\epsilon,c}(H)|t_0}$ using approximate tensor network states:
\begin{equation}
    \braket{\psi|P_{\eta,\epsilon,c}(H)|\psi}\simeq \frac{a_0}{2}+\sum_{k=1}^{d}a_k\braket{t_0|\widetilde{t_k}}.
\end{equation}
Its error can be upper bounded by:
\begin{align}
    &|\braket{\psi|P_{\eta,\epsilon,c}(H)|\psi} - \frac{a_0}{2} - \sum_{k=1}^{d}a_k\braket{t_0|\widetilde{t_k}}| \notag\\
    &\leq \sum_{k=1}^d |a_k||\braket{t_0|t_k} - \braket{t_0|\widetilde{t_k}}| \notag \\
    &\leq \frac{4}{\pi} \cdot \sum_{k=1}^d \Delta^{(k)}
\end{align}
If the error of the tensor network approximation is upper bounded by \eqref{eq:error_upper_bound}, then the estimation error of $\braket{\psi|P_{\eta,\epsilon,c}(H)|\psi}$ is upper bounded by:
\begin{equation}
    |\braket{\psi|P_{\eta,\epsilon,c}(H)|\psi} - \frac{a_0}{2} - \sum_{k=1}^{d}a_k\braket{t_0|\widetilde{t_k}}| \leq \frac{\chi^2}{8}.
\end{equation}
From Eqs \eqref{eq:deq_case1} and \eqref{eq:deq_case2}, we can successfully distinguish two cases and execute a binary search to estimate the ground state energy. 
\end{proof}

For later discussions, let $\widetilde{\mu}_i$ be the approximated moments:
\begin{align}
    \widetilde{\mu}_{2k} &= 2\braket{\widetilde{t}_k|\widetilde{t}_k}-\widetilde{\mu}_0 \label{eq:appro_moments_even}\\
    \widetilde{\mu}_{2k+1} &= 2\braket{\widetilde{t}_{k+1}|\widetilde{t}_k} - \widetilde{\mu}_1\label{eq:appro_moments_odd}
\end{align}
We also define the moment approximation error $\Delta_m^{(k)}$ as:
\begin{equation}
    \Delta_{m}^{(k)}=|\mu_k - \widetilde{\mu}_k|.
\end{equation}

\begin{lemma}\label{lemma:moments}
Assume that the Hamiltonian $H$ and the guiding state $\ket{\psi}$ satisfy the conditions of Theorem~\ref{th:1}.
Let $d=O^*(1/\epsilon)$ be the polynomial degree required to construct $\braket{t_0|P_{\eta,\epsilon,c}(H)|t_0}$.
If the approximated moments $\{\widetilde{\mu}_k\}_{k=0}^d$ satisfy
\begin{equation}
    \sum_{k=1}^{d}\Delta_m^{(k)} \leq \frac{\pi}{32}\chi^2,  \label{eq:error_upper_bound_moment}
\end{equation}
then we can compute an $\epsilon$-approximation of the ground state energy $\mathcal{E}(H)$ in time
\begin{equation}
    \mathrm{poly}(d).
\end{equation}
\end{lemma}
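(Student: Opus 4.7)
The plan is to mirror the proof of Theorem~\ref{th:2}, but with the substitution performed one level earlier: rather than replacing the Chebyshev vectors $\ket{t_k}$ by approximants $\ket{\widetilde{t}_k}$ and then forming inner products, I would plug the approximated moments $\{\widetilde{\mu}_k\}$ directly into the Chebyshev expansion of the filter polynomial. First I would expand $P_{\eta,\epsilon,c}(x)=\tfrac{a_0}{2}+\sum_{k=1}^{d}a_k T_k(x)$, which by Eq.~\eqref{eq:cheb_coef_bound} has $|a_0|\le 1$ and $|a_k|\le 4/\pi$ for $k\ge 1$, and define the classical estimator
\begin{equation}
\widetilde{E} := \frac{a_0}{2}+\sum_{k=1}^{d} a_k\,\widetilde{\mu}_k.
\end{equation}

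Next I would bound the deviation of $\widetilde{E}$ from the exact expectation. Since $\braket{\psi|P_{\eta,\epsilon,c}(H)|\psi} = \tfrac{a_0}{2}+\sum_{k=1}^{d}a_k\mu_k$ by Eq.~\eqref{eq:quantity_moment}, the triangle inequality together with the Chebyshev coefficient bound yields
\begin{equation}
\bigl|\braket{\psi|P_{\eta,\epsilon,c}(H)|\psi} - \widetilde{E}\bigr|
\;\le\;
\sum_{k=1}^{d}|a_k|\,\Delta_m^{(k)}
\;\le\;
\frac{4}{\pi}\sum_{k=1}^{d}\Delta_m^{(k)}.
\end{equation}
Invoking the hypothesis \eqref{eq:error_upper_bound_moment}, the right-hand side is bounded by $\chi^2/8$. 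I would then combine this with the gap guarantees \eqref{eq:deq_case1}--\eqref{eq:deq_case2}: choosing $\eta=\chi^2/4$ separates the two cases by $\Omega(\chi^2)$, so an estimator with additive error $\chi^2/8$ distinguishes them correctly. This drop-in replacement is then fed into the binary search over the shift parameter $c$ used in Lemmas~\ref{lemma:GSEE_QSVT} and \ref{lemma:GSEE_DEQ} to produce an $\epsilon$-accurate estimate of $\mathcal{E}(H)$.

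For the runtime, once the $\widetilde{\mu}_k$ are treated as inputs the work consists of computing the Chebyshev coefficients $\{a_k\}_{k\le d}$ of $P_{\eta,\epsilon,c}$ (e.g.\ via a standard Chebyshev approximation routine in $\mathrm{poly}(d)$ time), forming the single sum defining $\widetilde{E}$ in $O(d)$ arithmetic operations, and iterating this $O(\log(1/\epsilon))$ times for the binary search; since $d=O^*(1/\epsilon)$, the total cost is $\mathrm{poly}(d)$. The main obstacle I anticipate is purely bookkeeping: making precise that the coefficients $a_k$ depend on the current search shift $c$ and so must be (re)computed at each binary-search step, and that the precomputed moments $\widetilde{\mu}_k$ are reused across all shifts. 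No genuinely new analytic ingredient is needed beyond the coefficient bound \eqref{eq:cheb_coef_bound} and the case analysis already developed for Theorem~\ref{th:2}.
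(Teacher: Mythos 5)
Your proposal is correct and follows essentially the same route as the paper: the paper's proof consists precisely of the triangle-inequality bound $|\braket{\psi|P_{\eta,\epsilon,c}(H)|\psi} - \tfrac{a_0}{2} - \sum_k a_k\widetilde{\mu}_k| \le \tfrac{4}{\pi}\sum_k \Delta_m^{(k)}$ using the Chebyshev coefficient bound, with the case separation and binary search inherited from the earlier lemmas. Your write-up is in fact somewhat more explicit than the paper's (which omits the final $\chi^2/8$ conclusion and the runtime accounting), but no new idea is introduced on either side.
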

\begin{proof}
The estimation error of $\braket{t_0|P_{\eta,\epsilon,c}(H)|t_0}$ can be bounded as follows:
\begin{align}
    &|\braket{\psi|P_{\eta,\epsilon,c}(H)|\psi} - \frac{a_0}{2} - \sum_{k=1}^{d}a_k\widetilde{\mu}_k| \notag \\
    &\leq \sum_{k=1}^d |a_k||\mu_k - \widetilde{\mu}_k| \notag \\
    &\leq \frac{4}{\pi} \sum_{k=1}^d \Delta_m^{(k)} \label{eq:moment_error_quantity}
\end{align}
\end{proof}

\subsection{Approximation method}
To apply the above theorem, it is necessary to perform an efficient tensor network approximation of the Chebyshev vectors.
In this work, we focus in particular on Matrix Product States (MPS), which are widely used to represent one-dimensional quantum many-body systems.

\subsubsection{Matrix Product States (MPS)}
MPS is efficiently contractible tensor network states typically used for 1D quantum systems.
We assume that the guiding state $\ket{t_0}$ is represented as an MPS. The approximate Chebyshev vectors are generated via the recurrence relations:
\begin{align}
    &\ket{\widetilde{t}_1'} = H\ket{t_0} \label{eq:MPS_t1}\\
    &\ket{\widetilde{t}_k'} = 2H\ket{\widetilde{t}_{k-1}} - \ket{\widetilde{t}_{k-2}}. \label{eq:MPS_tk}
\end{align}
Applications of $H$ increase the bond dimension of the MPS.
We impose a maximum bond dimension $\chi_{\mathrm{mps}}$ and truncate if the bond dimension of the network exceeds this threshold:
\begin{equation}
    \ket{\widetilde{t}_k} = \mathrm{truncate}(\ket{\widetilde{t}_k'}). \label{eq:MPS_appro}
\end{equation}
In this study, we use a standard MPS compression algorithm based on the canonical form and bond truncation with singular value decomposition (SVD)~\cite{perez-garcia_Matrix_2007,schollwock_densitymatrix_2011}.
Higher-quality approximations can also be obtained using iterative methods such as two-site DMRG~\cite{white_Density_1992}, which we did not employ due to their runtime overhead and relatively small improvements.

In practice, the truncation error at step $k$ is usually quantified by the {\it cosine error}:
\begin{equation}
    \Delta_c^{(k)} = \left|1 - \frac{\braket{\widetilde{t}_k|\widetilde{t}_k'}}{\|\ket{\widetilde{t}_k}\|\|\ket{\widetilde{t}_k'}\|}\right|.
\end{equation}
Note that the cosine error is a local fitting error for each step and differs from the global error $\Delta^{(k)}$ used in Theorem~\ref{th:2}.
Nevertheless, we will see in the section on numerical experiments that this cosine error serves as a good indicator of the actual approximation error.

The truncation error at step $k$ is also quantified by  \emph{truncation error} $\Delta_t^{(k)}$, defined as:
\begin{equation}
    \Delta_t^{(k)} = \|\ket{\widetilde{t}_k'} - \ket{\widetilde{t}_k}\|. \label{eq:truncation_error}
\end{equation}
This quantity is also efficiently calculated within the MPS framework.
The local error does not grow exponentially due to the property of the Chebyshev recursion, and we can obtain a uniform upper bound for local truncation errors to guarantee that the dequantization algorithm works.

\begin{theorem}\label{th:local_bound_simple}
Assume the conditions of Theorem~\ref{th:2} hold.
Let $\delta$ be a uniform upper bound on the local truncation error defined in Eq.~\eqref{eq:truncation_error}, i.e., $\Delta_t^{(k)} \leq \delta$ for all $k$.
Then, the global accuracy condition (Eq.~\eqref{eq:error_upper_bound}) is satisfied if:
\begin{equation}
    \delta \leq \frac{3 \pi \chi^2}{16 d^3} = O(\chi^2 d^{-3}).
\end{equation}
\end{theorem}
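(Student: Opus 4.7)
The plan is to track how the local truncation errors propagate through the Chebyshev recursion and express the global error as a sum of operator-weighted local errors, where the operator weights turn out to be Chebyshev polynomials of the second kind. Once this structural identity is established, the stated bound reduces to an elementary summation.

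First, I would introduce the local perturbation vectors $\ket{e_k} := \ket{\widetilde{t}_k} - \ket{\widetilde{t}_k'}$, so that $\|e_k\| = \Delta_t^{(k)} \leq \delta$ by hypothesis. Combining this with the truncated recurrence~\eqref{eq:MPS_t1}--\eqref{eq:MPS_appro} and the exact one~\eqref{eq:recurrence1}--\eqref{eq:recurrence2}, the difference $\ket{r_k} := \ket{\widetilde{t}_k} - \ket{t_k}$ satisfies $\ket{r_0} = 0$, $\ket{r_1} = \ket{e_1}$, and for $k \geq 2$,
\begin{equation}
    \ket{r_k} = 2H\ket{r_{k-1}} - \ket{r_{k-2}} + \ket{e_k}.
\end{equation}
I would then prove by induction on $k$ the explicit formula
\begin{equation}
    \ket{r_k} = \sum_{j=1}^{k} U_{k-j}(H)\, \ket{e_j},
\end{equation}
where $U_m$ denotes the Chebyshev polynomial of the second kind, defined by $U_0(x) = 1$, $U_1(x) = 2x$, and $U_m(x) = 2xU_{m-1}(x) - U_{m-2}(x)$. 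The induction step is immediate: substituting the formula for $\ket{r_{k-1}}$ and $\ket{r_{k-2}}$ into the recursion above and collecting the coefficient of each $\ket{e_j}$ reproduces exactly the defining recurrence of $U_{k-j}$, while the newly added $\ket{e_k}$ term supplies the $j=k$ contribution with coefficient $U_0(H)=I$.

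Next, I would use the standard uniform bound $|U_m(x)| \leq m+1$ on $[-1,1]$, which combined with $\|H\| \leq 1$ yields $\|U_{k-j}(H)\| \leq k - j + 1$. Taking norms in the identity above gives
\begin{equation}
    \Delta^{(k)} \;\leq\; \delta \sum_{j=1}^{k} (k-j+1) \;=\; \delta\,\frac{k(k+1)}{2}.
\end{equation}
Summing over $k$ and using the closed form $\sum_{k=1}^{d} k(k+1)/2 = d(d+1)(d+2)/6$, I obtain
\begin{equation}
    \sum_{k=1}^{d}\Delta^{(k)} \;\leq\; \delta\,\frac{d(d+1)(d+2)}{6}.
\end{equation}
Setting this at most $\pi\chi^2/32$ and rearranging yields $\delta \leq 3\pi\chi^2 / [16\,d(d+1)(d+2)]$, which is upper bounded by the claimed $3\pi\chi^2/(16 d^3)$ asymptotically, completing the proof.

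The main obstacle is the first structural step: recognizing that naive application of the triangle inequality to the recurrence $\Delta^{(k)} \leq 2\Delta^{(k-1)} + \Delta^{(k-2)} + \delta$ gives exponential blow-up $(1+\sqrt 2)^k$, which would be far too weak. The key insight is to solve the linear error recursion exactly and to identify the propagator with $U_{k-j}(H)$, whose norm grows only linearly in the index. Once this identification is made, the remaining calculation is routine.
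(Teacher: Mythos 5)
Your proposal is correct and follows essentially the same route as the paper's proof: both solve the linear error recurrence exactly, identify the propagator of the local truncation error with the Chebyshev polynomials of the second kind $U_{k-j}(H)$, invoke the bound $\lVert U_n(H)\rVert \le n+1$, and sum the resulting quadratic-in-$k$ bounds to get the $O(d^3)$ accumulation. Your version is in fact slightly more careful than the paper's (explicit induction for the propagator formula and exact sums $d(d+1)(d+2)/6$ where the paper writes $\approx d^3/6$), and you correctly flag that the stated constant $3\pi\chi^2/(16d^3)$ only matches the exact bound asymptotically --- the same looseness present in the paper's own derivation.
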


\begin{proof}
Let $\ket{e_k} = \ket{t_k} - \ket{\widetilde{t}_k}$ be the global error vector at step $k$.
The local error introduced at step $k$ is $\ket{\eta_k} = \ket{\widetilde{t}_k'} - \ket{\widetilde{t}_k}$, with norm $\|\ket{\eta_k}\| = \Delta_t^{(k)} \leq \delta$.
The error propagates according to the linear recurrence:
\begin{equation}
    \ket{e_k} = 2H\ket{e_{k-1}} - \ket{e_{k-2}} + \ket{\eta_k}.
\end{equation}
The local truncation error $\ket{\eta_j}$ introduced at step $j$ accumulates at a later step $k$ ($k \ge j$).
This contribution is given by:
\begin{equation}
    U_{k-j}(H) \ket{\eta_j},
\end{equation}
where $U_k(x)$ denotes the Chebyshev polynomial of the second kind of degree $k$.
A key property of these polynomials is that their maximum absolute value on the interval $[-1, 1]$ grows linearly with their degree:
\begin{equation}
    \max_{x \in [-1, 1]} |U_n(x)| = n+1.
\end{equation}
Since the eigenvalues of $H$ lie within $[-1, 1]$, the operator norm is bounded by $\|U_{k-j}(H)\| \leq k-j+1$.
Consequently, the norm of the local error $\ket{\eta_j}$ is amplified by a factor of at most $(k-j+1)$ when it propagates to step $k$.

Assuming the worst-case accumulation, the norm of the global error at step $k$ is bounded by:
\begin{equation}
    \|\ket{e_k}\| \leq \sum_{j=1}^{k} (k-j+1) \|\ket{\eta_j}\| \leq \delta \sum_{m=1}^{k} m \approx \frac{k^2}{2} \delta.
\end{equation}
Summing these global errors up to degree $d$ as required by Theorem~\ref{th:2}:
\begin{equation}
    \sum_{k=1}^{d} \|\ket{e_k}\| \lesssim \sum_{k=1}^{d} \frac{k^2}{2} \delta \approx \frac{d^3}{6} \delta.
\end{equation}
Requiring this total error to be bounded by $\frac{\pi}{32}\chi^2$, we obtain the condition $\frac{d^3}{6} \delta \leq \frac{\pi}{32}\chi^2$, which yields the stated bound.
\end{proof}

This theorem provides a sufficient condition for the approximated dequantized algorithm to execute with theoretical guarantees.
We emphasize, however, that this bound is likely conservative, as it is derived from a worst-case error analysis.
Specifically, the $O(d^{-3})$ dependence already imposes a requirement of $\Delta_t \sim 10^{-6}$ even for moderate degrees such as $d \sim 100$.
Given that the cosine error $\Delta_c$, used as a local fitting error in our numerical simulations, scales quadratically with the truncation error ($\Delta_c = O(\Delta_t^2)$), strictly satisfying this theoretical bound would correspond to a high-accuracy regime (e.g., $\Delta_c \sim 10^{-12}$).
In practice, our numerical results indicate that the algorithm can perform accurately even when the errors exceed this threshold, as we will discuss later.

\subsubsection{Linear Prediction} \label{sec:linear_prediction}
In our approximation strategy, we represent the Chebyshev vectors as tensor network states.
Due to entanglement growth, accurately representing these vectors requires sufficiently large bond dimensions.
However, computing tens of thousands of approximate tensor network states with such a large bond dimension is computationally demanding.

As shown in Lemma~\ref{lemma:moments}, only the approximate moments ${\widetilde{\mu}_k}$ are required for the ground state energy estimation, rather than the full tensor network states.
To this end, we adopt a linear prediction (LP) scheme~\cite{wolf_Chebyshev_2014,ganahl_Chebyshev_2014} to extrapolate the moment sequence $\{\mu_k\}$ beyond the range directly accessible to MPS.

Recall that the initial state is written in the eigenbasis of $H$:
\begin{equation}
    \ket{\psi} = \sum_i c_i\ket{\lambda_i}
\end{equation}
and define $\theta_i=\arccos{(\lambda_i)}$.
Each moment can be written as
\begin{align}
    \mu_k 
    &= \langle\psi|T_k(H)|\psi\rangle \notag \\
    &= \sum_i |c_i|^2 \cos(k\theta_i) 
    \label{eq:mu_oscillatory}
\end{align}
showing that the sequence is a finite linear combination of oscillatory modes.
If the initial state $\ket{\psi}$ has non-zero overlap with only a few eigenstates (i.e., only a small number of coefficients $|c_i|^2$ are dominant), then the moment sequence \eqref{eq:mu_oscillatory} is governed by only a few frequencies $\{\theta_i\}$.  
In this case, the effective dimensionality of the signal is low, and the sequence is well approximated by a low-order autoregressive model.  

Suppose we calculate the Chebyshev vectors up to degree $N_\mathrm{max}$ and have a moment sequence $\{\mu_k\}_{k=0}^{2N_\mathrm{max}}$.
We model the tail of the moment sequence using the relation
\begin{equation}
    \mu_n \simeq 
    -\sum_{j=1}^{n_{\mathrm{fit}}} a_j \mu_{n-j},
    \label{eq:ar_model}
\end{equation}
where $\{a_j\}$ are real coefficients.  
These coefficients are obtained by a least-squares fit of Eq.~\eqref{eq:ar_model} over the final $n\in\{2N_{\mathrm{max}}-n_{\mathrm{fit}},\dots,2N_{\mathrm{max}}-1\}$ moments.
Once the coefficients are determined, the moments for $n > 2N_{\mathrm{max}}$ are generated recursively using Eq.~\eqref{eq:ar_model}.  

In this way, the linear prediction procedure allows us to obtain $\{\mu_n\}$ for $n \gg 2N_{\mathrm{max}}$ without further MPS calculations.
Using these extrapolated moments, we can then approximately evaluate the higher-degree polynomial for $\langle\psi|P(H)|\psi\rangle$.
Note that this is a heuristic approach, and its accuracy depends on the initial state and the system we consider.
A theoretical characterization of when linear prediction yields reliable extrapolations in this context remains an important topic for future work.

\section{Results}\label{sec:results}
In this section, we implement the proposed method and evaluate its performance through numerical experiments on benchmark models.
The tensor network simulations are implemented using the quimb~\cite{gray_quimb_2018} library, and all computations are executed on a machine with an AMD EPYC 7532 32-core processor and an NVIDIA A100 40-GB GPU.

\subsection{Problem Settings}
\subsubsection{Target Model}
In this work, we consider the one-dimensional and two-dimensional transverse-field Ising models (1D TFIM and 2D TFIM) as benchmark models.
The 1D TFIM is defined on an open chain:
\begin{equation}
H_{\mathrm{1D}}
= -J \sum_{i=1}^{L-1} X_i X_{i+1}
    - h \sum_{i=1}^{L} Z_i ,
\end{equation}
and the 2D TFIM is defined on an open lattice with $L\times L$ sites:
\begin{equation}
H_{\mathrm{2D}}
= -J \sum_{\langle i,j\rangle} X_i X_j
    - h \sum_{i} Z_i .
\end{equation}

Depending on the ratio $h/J$, the system exhibits a quantum phase transition ($h/J=1.0$ for the 1D TFIM and $h/J\simeq 3.044$ for the 2D TFIM).
Because of its simplicity and rich physics, the TFIM is widely used as a benchmark model for testing numerical methods based on tensor networks.
In this work, we normalize $\|H\|\leq 1$ so that the eigenvalue filter can be applied.

\subsubsection{Chebyshev Polynomial Approximation}
\begin{figure*}[t]
    \centering
    \includegraphics[width=0.9\linewidth]{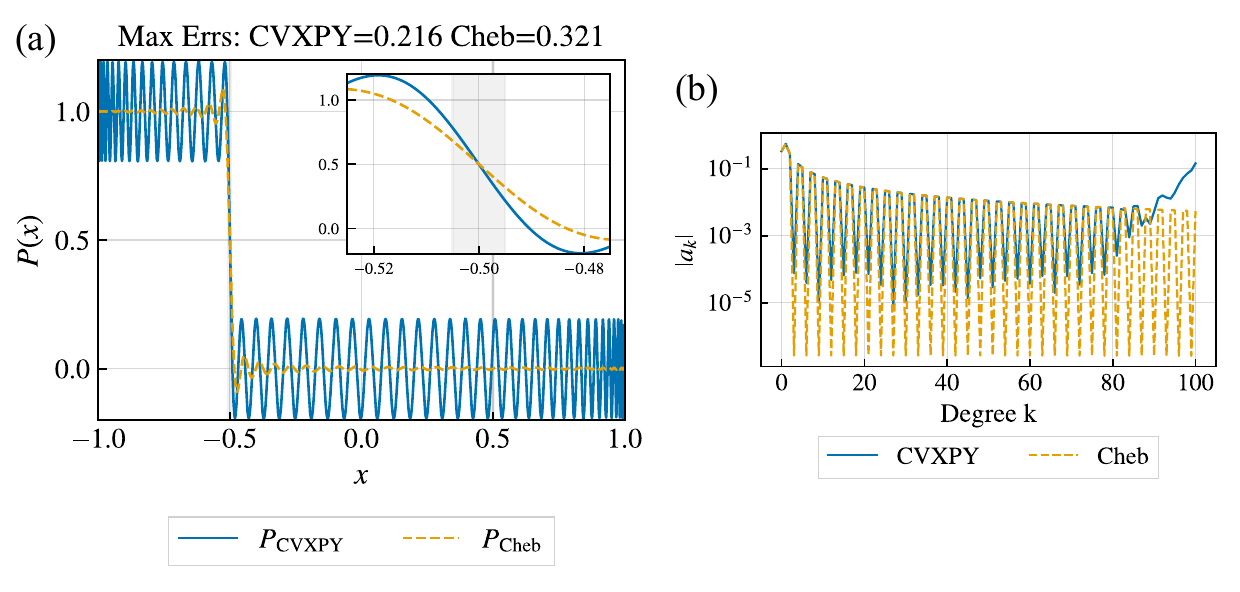}
    \caption{The comparison between two polynomial approximations. (a) The polynomials that approximate the shifted sign function are plotted. $P_{\mathrm{CVXPY}}$ is the one obtained by CVXPY and $P_{\mathrm{Cheb}}$ is obtained by Chebyshev truncation. The discontinuity point is set to $c=-0.5$. The inset is the zoom-in version around $x=c$. (b) The Chebyshev coefficients $|a_k|$ are plotted as a function of $k$.}
    \label{fig:figchebopt}
\end{figure*}

Several methods have been proposed for constructing Chebyshev polynomial approximations to the shifted sign function, including those based on numerical optimization~\cite{dong_GroundState_2022,kane_Nearly_2024}.
However, optimization-based approaches become prohibitively slow and unstable, especially when the degree exceeds a few thousand.

In this work, we instead generate the required polynomials from a Chebyshev expansion of the corresponding error function.
Fig.~\ref{fig:figchebopt}(a) compares two polynomial approximations of the shifted sign function: 
one obtained using CVXPY and the other obtained by truncating the Chebyshev expansion of the error function.  
The gap parameter is set to $\Delta = 0.01$ and the polynomial degree is set to $d=100$, and no parity constraint is imposed.
The maximum approximation error of the CVXPY polynomial is lower than that of the Chebyshev truncation: $0.216$ versus $0.321$.
This can be confirmed by inspecting the inset of Fig.~\ref{fig:figchebopt}(a): the CVXPY polynomial can reproduce the discontinuity better.
However, the CVXPY-generated polynomial exhibits strong oscillations outside the gap region, whereas the Chebyshev truncation remains stable except for the Gibbs oscillations near the discontinuity.

Fig.~\ref{fig:figchebopt}(b) plots the magnitude of the coefficients $|a_k|$ for both polynomials.
For the Chebyshev truncation, the coefficients decay exponentially with $k$, while the CVXPY polynomial exhibits larger coefficients at higher orders.  
This indicates that the CVXPY solution reduces the approximation error by actively employing higher-degree terms.  
Nevertheless, as seen from Eq.~\eqref{eq:moment_error_quantity}, the error in the quantity of interest involves the product of $|a_k|$ and $\Delta_m^{(k)}$.
Since $\Delta_m^{(k)}$ grows with $k$, a polynomial with exponentially decaying coefficients, such as that obtained by Chebyshev truncation, is far more desirable for classical simulation.

Although Chebyshev truncation is not optimal in the max-norm, its error differs from the optimal one by at most a constant factor.
Furthermore, the Chebyshev truncation approach remains numerically stable and computationally efficient even for degrees up to tens of thousands.
For these reasons, we adopt Chebyshev truncation as our method for constructing the polynomial approximations.

In principle, more sophisticated optimization methods may yield polynomials with higher accuracy and improved stability.
Such optimized constructions could be beneficial not only for classical simulations but also for practical implementations of QSVT, where the degree directly determines the circuit depth.  
Developing such optimized and robust polynomial constructions is an interesting direction for future work.

\subsubsection{Numerical Experiment Settings}

We adopt an MPS-based implementation of the dequantization algorithm for our numerical simulations.
To run the algorithm, we require an initial MPS with a sufficiently large overlap with the true ground state of the model.
We prepare the initial states by running DMRG using TeNPy~\cite{hauschild_Efficient_2018a}, with the bond dimension restricted to $\chi_{\mathrm{init}}$.

We then compute the Chebyshev vectors according to Eqs.~\eqref{eq:MPS_t1},~\eqref{eq:MPS_tk} and \eqref{eq:MPS_appro}, using an MPS approximation whose bond dimension is truncated to $\chi_{\mathrm{mps}}$ at every step. 
After obtaining $N_\mathrm{max}$ Chebyshev vectors, we compute the $2N_\mathrm{max}$ Chebyshev moments using Eqs.~\eqref{eq:appro_moments_even} and \eqref{eq:appro_moments_odd}.
We then apply linear prediction to extrapolate the moment sequence if needed.

Given access to the moments, we can evaluate expectation values $\braket{\psi|P(H)|\psi}$ for polynomial $P$. 
This enables a binary-search procedure for ground state energy estimation: by repeatedly shrinking the gap of a shifted sign function, we can exponentially refine the estimate of the ground energy.

For simplicity, in this work we fix the gap parameter as $\Delta = 1/d$, where $d$ is the maximal moment order used.  
Empirically, we have found that this choice suffices to construct a high-accuracy polynomial approximation via Chebyshev truncation.
For a fixed gap, we first construct a family of polynomials $\{P_{\eta, \Delta, x}\}_x$, where $x$ ranges from $-1$ to $1$.
We then evaluate the cumulative function
\begin{equation}
    C(x) := \frac{a_0^{(x)}}{2} + \sum_{k=1}^d a_k^{(x)}\, \mu_k,
\end{equation}
where $a_k^{(x)}$ denotes the Chebyshev coefficients of $P_{\eta, \Delta, x}$.
We take the value of $x$ for which $C(x)$ is closest to the threshold $\chi^{2}/2$, and return the interval $[x-\Delta/2, x+\Delta/2]$ as the resulting estimate with error $\Delta$.
The solution obtained in this way is reachable by the original binary-search procedure.

\subsection{Numerical Results}

\begin{figure*}[t]
    \centering
    \includegraphics[width=\linewidth]{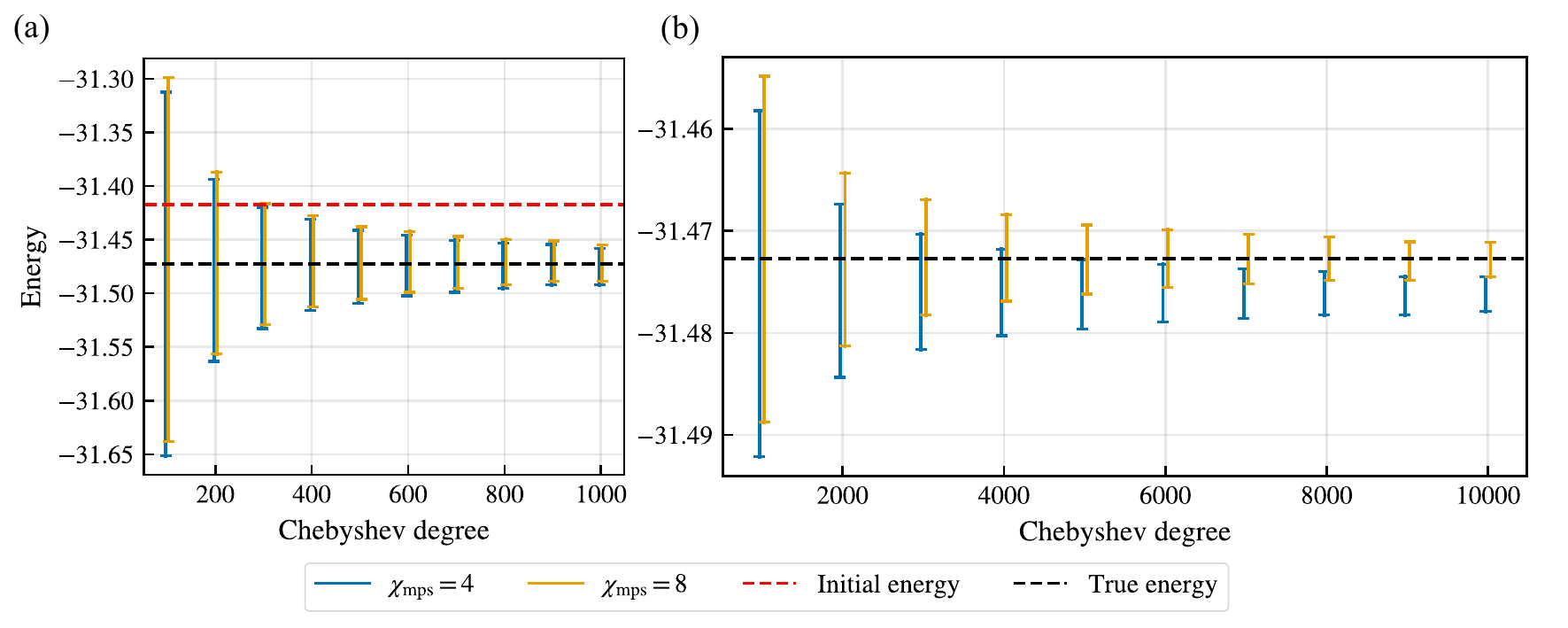}
    \caption{The estimated energy is plotted as a function of Chebyshev degree (a) from 100 to 1000 and (b) from 1000 to 10000 for the 1D TFIM with $L=25, J=1.0, h=1.0$.
    Each color represents a different maximum bond dimension $\chi_{\mathrm{mps}}$ used for the Chebyshev vector.
    The red dashed line represents the energy of the initial state, and the black dashed line represents the exact ground state energy. The initial energy is prepared with $\chi_{\mathrm{init}}=2$.}
    \label{fig:fig1}
\end{figure*}

To obtain a deeper understanding of the proposed algorithm, we first perform simulations in a regime where full state-vector simulation is still feasible.
Fig.~\ref{fig:fig1} shows the results of the ground-state energy estimation for the 1D TFIM. We consider a system of size $L=25$ at the critical point $J=1.0, h=1.0$. The initial state is prepared using DMRG with bond dimension $\chi_{\mathrm{init}}=2$, and each Chebyshev vector is approximated as an MPS with maximum bond dimension $\chi_{\mathrm{mps}}=4$ or $8$.
As shown in the figure, the dequantized algorithm yields a ground-state energy estimate that is more accurate than the energy of the initial state, indicating that our algorithm is working correctly.
Moreover, increasing the Chebyshev degree reduces the gap of the approximated threshold function, enabling higher-precision energy estimation. 
The degrees we reach, $d=10^4$, are far beyond what is achievable in the exact dequantized algorithm, including those that use Monte Carlo sampling.
This demonstrates that the approximated tensor network approach brings the algorithm into a practical regime while preserving the essence of the dequantized framework.

\begin{figure*}[!t]
    \centering
    \includegraphics[width=0.85\linewidth]{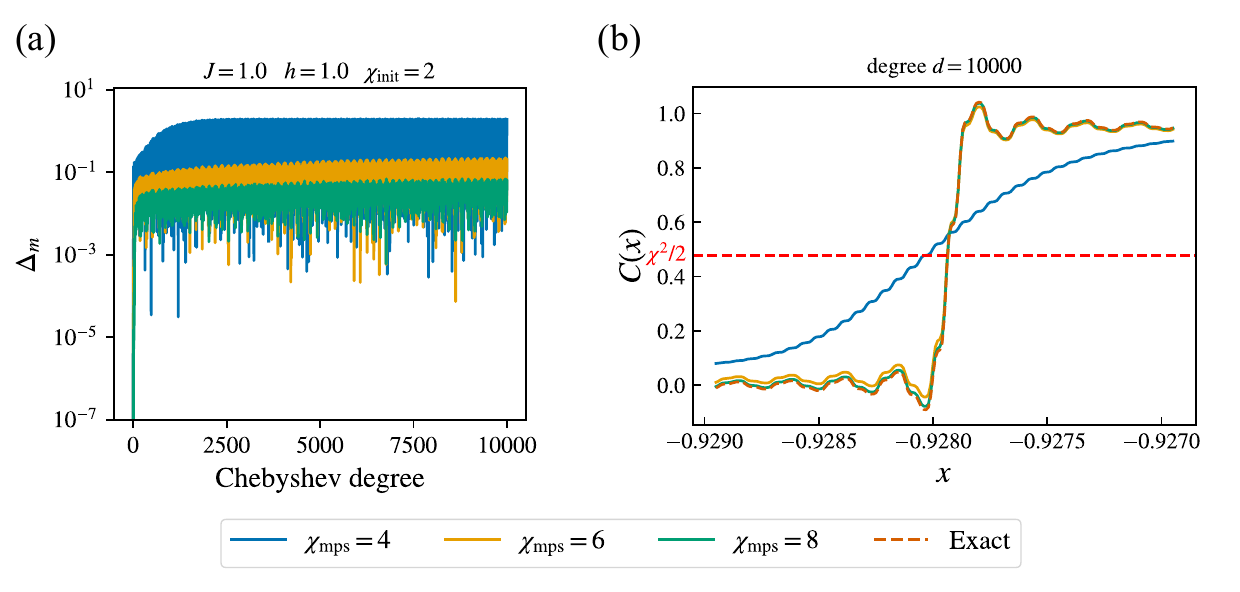}
    \caption{(a) The moment error $\Delta_m^{(k)}$ is plotted every $10$ steps as a function of Chebyshev degree $k$ for the 1D TFIM model.
    Each color represents a different maximum bond dimension of MPS. Parameters are set to $J=1.0, h=1.0, \chi_{\mathrm{init}}=2$.
    (b) The cumulative function $C(x)$ using $d$ moments is plotted every $10$ steps as a function of $x$.
    The dashed line shows the result with the exact moments obtained from the state-vector simulation.
    The red horizontal line is the threshold $\chi^2/2$ to estimate the ground state energy.}
    \label{fig:fig1Dacc}
\end{figure*}

We also plot the moment error $\Delta_m^{(k)}$ for various maximum bond dimensions $\chi_{\mathrm{mps}}$ in Fig.~\ref{fig:fig1Dacc}(a).
As expected, the moment error decreases as we increase the bond dimension and represent the Chebyshev vector more faithfully. 
Fig.~\ref{fig:fig1Dacc}(b) shows the cumulative function $C(x)$ constructed from the sequence of moments up to order $10^4$.
We observe that a larger bond dimension yields a more accurate reconstruction of $C(x)$, which in turn leads to a more reliable energy estimate.
These results indicate that increasing the bond dimension systematically improves the quality of the results, just as in conventional tensor network methods in condensed matter physics and data science.

\begin{figure*}[t]
    \centering
    \includegraphics[width=0.8\linewidth]{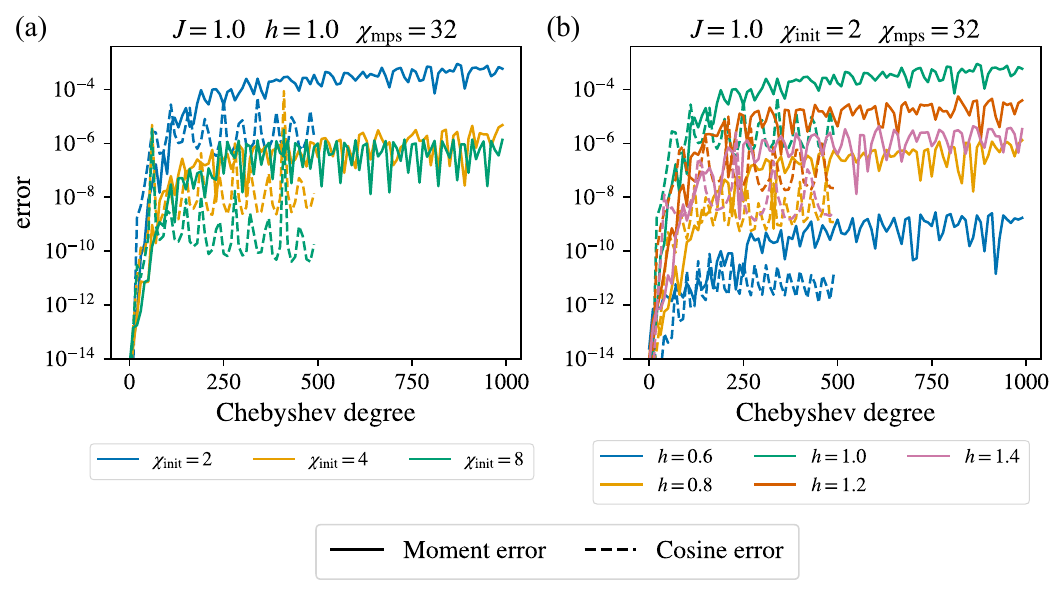}
    \caption{The cosine error at each step and the moment error for the 1D TFIM are plotted as functions of Chebyshev degree. (a) Different colors represent the bond dimension $\chi_{\mathrm{init}}$ for the initial state preparation. (b) Different colors represent the different values of transverse field $h$.}
    \label{fig:fig2}
\end{figure*}

Fig.~\ref{fig:fig2} presents the moment error and cosine error for the 1D TFIM under various parameter choices.
The dashed curves represent the cosine error obtained at each MPS truncation step, while the solid curves show the deviation from the exact Chebyshev moments computed using state-vector simulation.
Fig.~\ref{fig:fig2}(a) shows the effect of changing the initial-state bond dimension $\chi_{\mathrm{init}}$.
Interestingly, a larger $\chi_{\mathrm{init}}$ leads to a more accurate representation of the Chebyshev vectors, even when we use the same bond dimension $\chi_{\mathrm{mps}}$.
A good initial state is close to the ground state, which exhibits low entanglement entropy.
Therefore, it is reasonable to expect that the application of the Hamiltonian to generate the Chebyshev vectors will keep the resulting states near the low-entanglement manifold.
In the QSVT-based GSEE algorithm, a better initial state overlap $\chi$ leads to a more efficient runtime.
Our results suggest, however, that the improvement of the initial state also lowers the cost of the corresponding dequantized algorithm, thereby potentially reducing the quantum advantage.

Fig.~\ref{fig:fig2}(b) shows how the error varies with the transverse field $h$. 
The hardest instance occurs at $h=1.0$, while the error decreases as $h$ deviates from $1.0$.
These observations are consistent with the intuition that, in the 1D TFIM, $h=1.0$ corresponds to the critical point, where the initial state contains a broader set of eigenstates, making the corresponding Chebyshev vectors inherently more difficult to approximate.

Moreover, Fig.~\ref{fig:fig2} demonstrates that the behavior of the cosine error exhibits the same trend as the actual error in the Chebyshev moments.
This implies that the local fitting error strongly reflects the global error.
Despite the moment error being accessible only when an exact state-vector simulation is tractable, the cosine error can always be computed efficiently from the MPS. 
This confirms that the precision of the Chebyshev vectors can be reliably estimated from the cosine error.

\begin{figure*}[t]
    \centering
    \includegraphics[width=\linewidth]{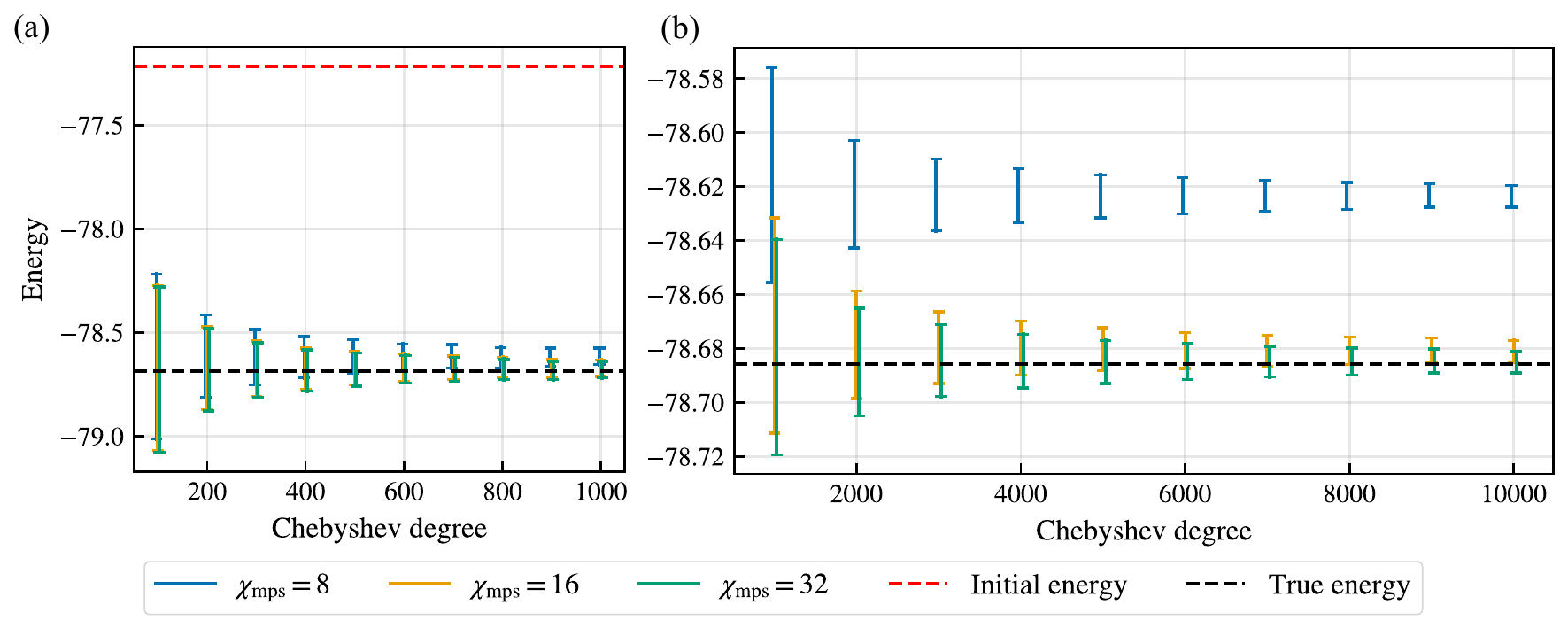}
    \caption{The estimated energy is plotted as a function of Chebyshev degree for the 2D TFIM with $L=5, J=1.0, h=3.0$.
    Each color represents a different maximum bond dimension $\chi_{\mathrm{mps}}$ for the Chebyshev vectors.
    The red dashed line represents the energy of the initial state, and the black dashed line represents the exact ground state energy. The initial energy is prepared with $\chi_{\mathrm{init}}=2$.}
    \label{fig:fig2_2}
\end{figure*}

\begin{figure*}[t]
    \centering
    \includegraphics[width=0.8\linewidth]{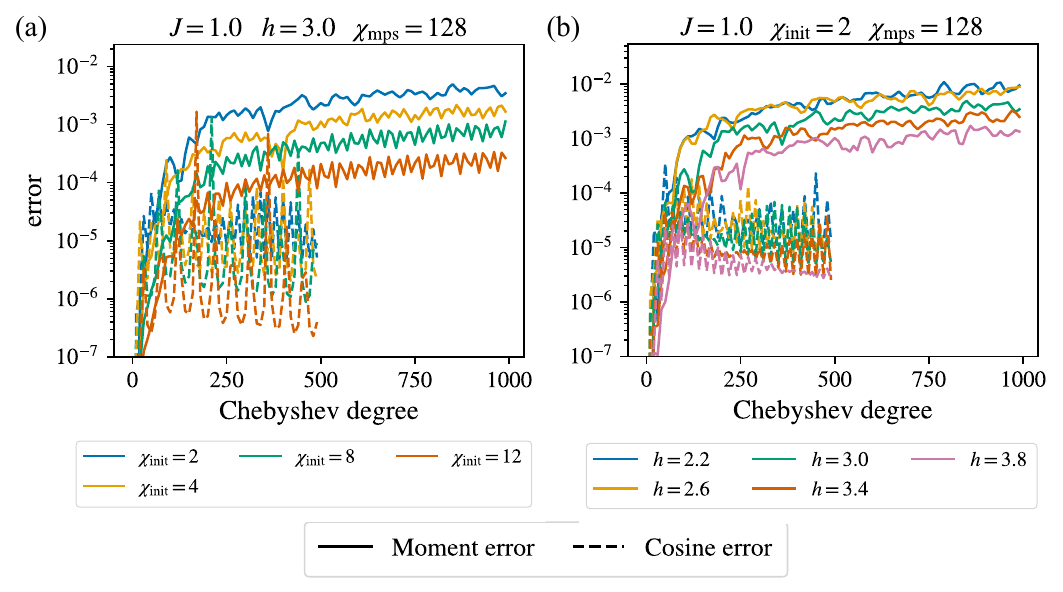}
    \caption{The cosine error for each step and the exact moment error for the 2D TFIM are plotted every $10$ steps as functions of Chebyshev degree. (a) Different colors represent the bond dimension for the initial state preparation. (b) Different colors represent the different transverse field $h$.}
    \label{fig:fig1_2}
\end{figure*}

We also perform simulations for the 2D TFIM and compare the results with the exact state-vector simulation. Fig.~\ref{fig:fig2_2} shows the results of the ground state energy estimation and Fig.~\ref{fig:fig1_2} shows the error in the Chebyshev moments for 2D TFIM.
Both figures exhibit trends similar to the 1D case: the dequantized algorithm yields energy estimates that are more accurate than that of the initial state, and increasing either the bond dimension or the quality of the initial state leads to more accurate approximations of the Chebyshev vectors.
The difficulty of the approximation also depends on the model parameters.
Note that the hardest point of $h$ deviates from the actual critical point because of finite size effects.
However, the overall accuracy is significantly worse than in the 1D case.
This degradation arises from the intrinsic difficulty of approximating a two-dimensional model using a one-dimensional tensor network ansatz.
In the 2D TFIM, the entanglement entropy of the ground state obeys an area law $S\sim \alpha L$.
Therefore, one can expect that the Chebyshev vectors also obey the same scaling, meaning that they can no longer be efficiently represented by MPS.
These results indicate that substantially larger bond dimensions are required to achieve high-precision approximations in two dimensions as the system size increases.

Finally, we present results for a $100$-qubit system, which is a typical target size where state-vector simulation is infeasible and a quantum computer would be required.
The MPS simulation for this size becomes computationally costly if we set the bond dimension $\chi_{\mathrm{mps}}$ large.
Therefore, we combine the Chebyshev MPS simulation with the linear prediction technique described in Sec.~\ref{sec:linear_prediction}.

\begin{figure*}[t]
    \centering
    \includegraphics[width=\linewidth]{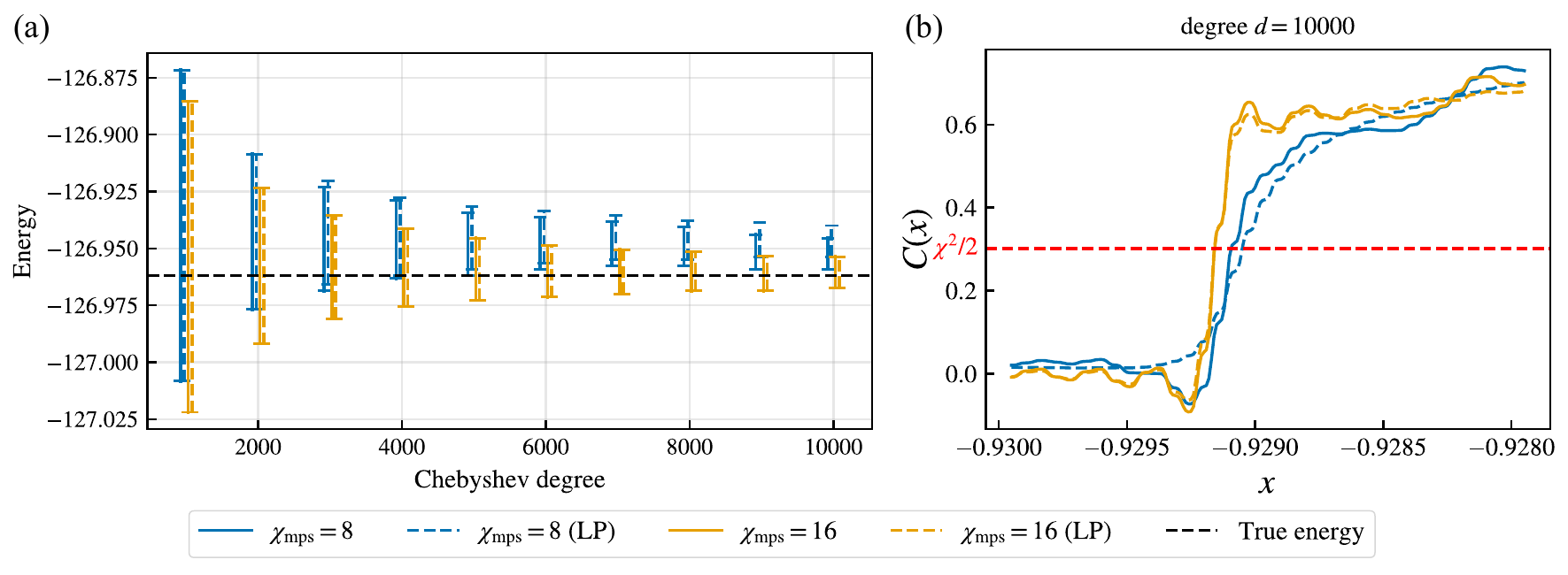}
    \caption{(a) The estimated energy is plotted as a function of Chebyshev degree for the 1D TFIM with $L=100, J=1.0, h=1.0$.
    Each color represents a different maximum bond dimension $\chi_{\mathrm{mps}}$ for the Chebyshev vectors.
    Dashed lines represent the results obtained from linear prediction (LP).
    The black dashed line represents the ground state energy obtained by DMRG with a sufficiently large bond dimension.
    The initial energy is prepared with $\chi_{\mathrm{init}}=2$.
    (b) The cumulative function $C(x)$ obtained from moments up to $10^4$ degree is plotted for different settings.
    The red horizontal line is the threshold $\chi^2/2$ to estimate the ground state energy.}
    \label{fig:fig3_1}
\end{figure*}

Fig.~\ref{fig:fig3_1} shows the results of ground state energy estimation for the 1D TFIM with $L=100, J=1.0$ and $h=1.0$, for several choices of the maximum bond dimension.
As shown in Fig.~\ref{fig:fig3_1}(a), $\chi_{\mathrm{mps}}=16$ is already sufficient to accurately approximate the Chebyshev vectors, resulting in a reliable energy estimate.
Fig.~\ref{fig:fig3_1}(b) further shows that the cumulative function begins to display the expected jump discontinuity as the bond dimension increases, approaching the ideal shape.
These observations quantitatively demonstrate that ground state energy estimation for the 1D TFIM can be easily dequantized and does not exhibit quantum advantage.

We also plot the results obtained via linear prediction as dashed lines.
In both panels (a) and (b), the extrapolated results closely match those from the direct MPS calculations.
In particular, for $\chi_{\mathrm{mps}}=16$, the linear prediction method reproduces the energy estimation and the cumulative function with near-perfect accuracy, owing to the high-quality first $1000$ moments.
These results indicate that the quantum algorithm may, in practice, be dequantized using much lower-order moments than one would expect.

\begin{figure*}[t]
    \centering
    \includegraphics[width=\linewidth]{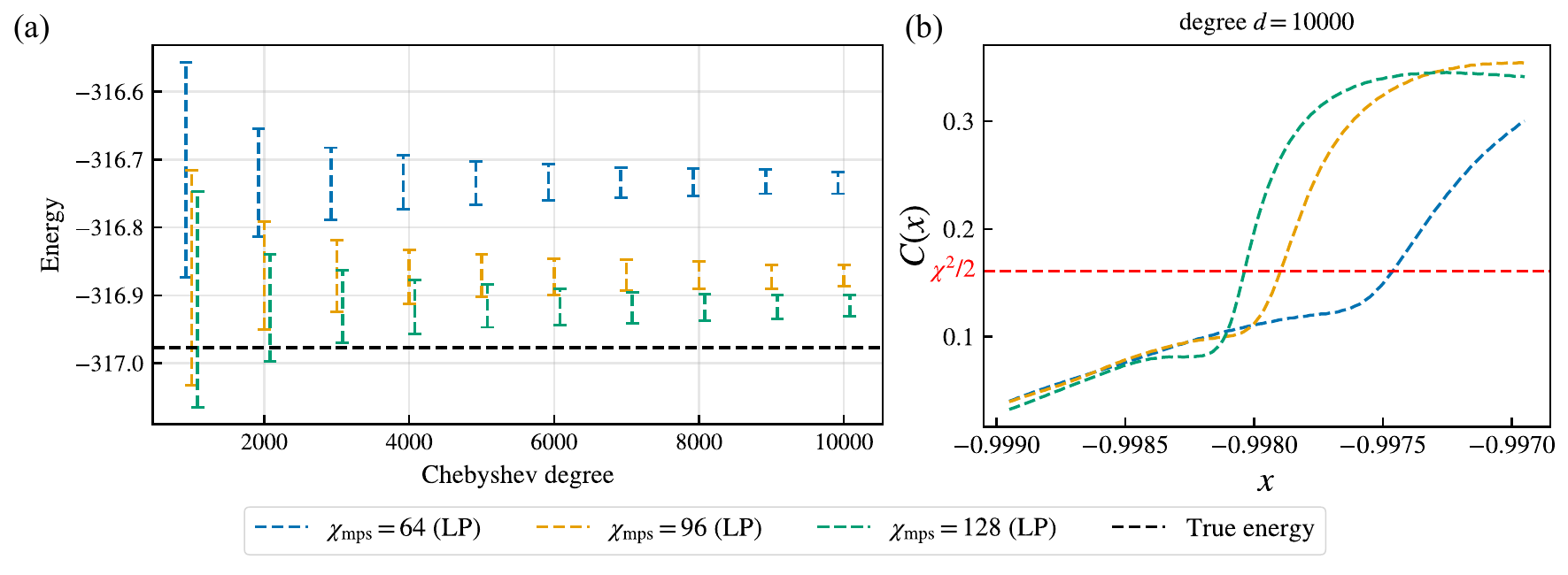}
    \caption{(a) The estimated energy is plotted as a function of Chebyshev degree for the 2D TFIM with $L=10, J=1.0, h=3.0$.
    Each color represents a different maximum bond dimension $\chi_{\mathrm{mps}}$ for the Chebyshev vectors.
    Moments beyond order $10^3$ are estimated using linear prediction (LP).
    The black dashed line represents the ground state energy obtained by DMRG with sufficient bond dimension.
    The initial energy is prepared with $\chi_{\mathrm{init}}=2$.
    (b) The cumulative function $C(x)$ obtained by moments up to $10^4$ degree is plotted for different settings.
    Red horizontal line is the threshold $\chi^2/2$ to estimate the ground state energy.}
    \label{fig:fig3_2}
\end{figure*}

Fig.~\ref{fig:fig3_2} shows the results of ground state energy estimation for the 2D TFIM on an $L=10$ lattice, corresponding to a $100$-qubit system.
Because classical computation of the Chebyshev MPS becomes increasingly expensive (each step takes several minutes for the bond dimensions considered here), we compute the moments up to order $10^3$ using MPS, and use these to extrapolate the sequence up to order $10^4$.
In contrast to the 1D case, the bond dimensions we use in the experiments are insufficient to obtain an accurate energy estimate for the 2D system due to its entanglement growth.
As a result, the quantum algorithm for the high precision regime cannot be dequantized.

Nevertheless, an important observation is that the accuracy improves systematically as the bond dimension increases.
This behavior suggests that, if the bond dimensions are increased further, the results will approach those of the ideal QSVT algorithm.
This provides a key conceptual advantage of the tensor network-based dequantization framework.
It enables an intuitive comparison between classical and quantum computational regimes through the single precision parameter $\epsilon$, which is the spirit of the dequantization.
The required precision determines the polynomial degree $d$, which directly determines the quantum computational cost.
On the classical side, the task is to accurately compute Chebyshev moments up to order $d$.
Because the approximation error decreases systematically with increasing bond dimension, this framework allows us to estimate the necessary classical computational resources to achieve a desired accuracy.

If increasing the bond dimension appears sufficient for the dequantized algorithm to succeed, then no quantum advantage exists in that regime.
Conversely, if the Chebyshev vectors and the moment sequence cannot be approximated by any classical methods while the required polynomial degree $d$ remains within the feasible range for a quantum computer, then the corresponding task may lie in a regime where quantum computation provides genuine utility.

\section{Discussion and Conclusion}\label{sec:conclusion}
In this work, we have proposed a dequantization framework for ground state energy estimation based on tensor networks.  
We first showed that the tensor network-based dequantized algorithm reproduces the computational complexity of prior dequantization results while eliminating the need for Monte Carlo sampling.
We then introduced practical variants that incorporate tensor network approximations, which run in time linear in $1/\epsilon$ when the approximation error is within a tolerable range.
Finally, we demonstrated through numerical experiments that the dequantized algorithm works for the 1D and 2D TFIM up to $100$ qubits and Chebyshev degrees of order $10^4$.
Although the 2D TFIM could not be fully dequantized with the bond dimensions used in our experiments, this limitation is itself informative: it suggests that the boundary between the classically tractable and quantum-advantageous regions.

The ultimate goal of our research is to identify the boundary between classical and quantum computational power, and to determine the regimes in which quantum computation provides a genuine advantage.
There are several possible approaches to answering this question.
One approach is to compare against the best-known classical heuristics.  
For the GSEE problem, a variety of efficient and accurate numerical algorithms have long been developed, such as DMRG.
However, because classical and quantum algorithms have fundamentally different structures, performing such comparisons in a systematic and model-independent manner is challenging. 
Another approach is the classical simulation of quantum circuits~\cite{huang_Efficient_2021, tindall_Efficient_2023}.  
This provides a unified comparison framework because all quantum algorithms are expressed as quantum circuits and can be evaluated under a single metric of classical simulation cost.
While this method has been successfully applied to NISQ algorithms, it becomes extremely challenging for FTQC algorithms involving many qubits and deep circuits.
Furthermore, such simulations generally cannot exploit problem-specific structure or incorporate approximation methods effectively.

The tensor network-based dequantization framework addresses several of these issues.  
First, the classical and quantum algorithms share the same structure, and their complexity can be compared using a single accuracy parameter~$\epsilon$, or equivalently, the polynomial degree of the filter function.
Classical simulations are limited by the degree for which the approximation remains successful.
In this sense, the hardness of the classical simulation is fully characterized by the difficulty of approximating the Chebyshev vectors.  
On the other hand, the cost of a quantum algorithm, at least in theory, scales only polynomially with the degree.
This perspective allows us to intuitively separate the regime in which classical methods remain effective from the regime in which quantum advantage is possible, using a single parameter $\epsilon$ or the Chebyshev degree $d$.

The role of approximation is also clear in this framework.
Classical executability reduces to how accurately tensor network states can represent the Chebyshev vectors and their moments.  
Different tensor network ansatz, like tree tensor network~\cite{shi_Classical_2006}, multi-scale renormalization ansatz~\cite{vidal_Entanglement_2007}, and projected entangled pair states~\cite{verstraete_Renormalization_2004}, may yield substantially better approximations depending on the structure of the underlying physical system.

An interesting direction for future work is to apply this dequantization methodology to other QSVT-based algorithms.  
The core requirement of our method is the ability to approximate the Chebyshev vectors $\ket{t_k}$ and the moments $\mu_k$.  
Thus, any QSVT-based algorithm whose output can be reconstructed solely from these quantities becomes immediately dequantizable.  
To establish quantum advantage in such settings, it is therefore necessary to show that the required polynomial degree is sufficiently large and that the Chebyshev vectors are classically hard to approximate.

\section*{Acknowledgements}
This work is supported by MEXT Quantum Leap Flagship Program (MEXT Q-LEAP) Grant No. JPMXS0120319794, JST COI-NEXT Grant No. JPMJPF2014, and JST CREST JPMJCR24I3.
T.S. is supported by MEXT Quantum Leap Flagship Program (MEXT Q-LEAP) Grant No. JPMXS0118067394, and JST PRESTO (Grant No. JPMJPR24F4).


\bibliography{reference_manabe}

\appendix

\end{document}